\newcounter{hours}\newcounter{minutes}
\newcommand\printtime{\setcounter{hours}{\time/60}%
  \setcounter{minutes}{\time-\value{hours}*60}%
  \thehours\,h\,\theminutes}
\newcommand\dateandtime{\today\quad\printtime}
  \newlength\commentspace
  \newcommand\algcomment[2]{\makebox[0pt][l]{\hspace{-#1em}%
    \hspace{\commentspace}$\triangleright$ #2}}
\newcommand{\algorithmicfunc}[1]{\textsc{#1}}
\newcommand{\FUNC}[1]{\item[\algorithmicfunc{#1}]}
\newcommand{\eqed}{\tag*{\qedsymbol}}
\newcommand{\Card}{\operatorname{Card}}
\newcommand\A{\mathcal{A}}
\newcommand\B{\mathcal{B}}
\newcommand{\cF}{\mathcal{F}}
\newcommand{\cK}{\mathcal{K}}
\newcommand{\cM}{\mathcal{M}}
\newcommand{\cP}{\mathcal{P}}
\newcommand{\cQ}{\mathcal{Q}}
\newcommand{\cU}{\mathcal{U}}
\newcommand{\cW}{\mathcal{W}}
\newcommand\e{\varepsilon}
\let\tto\xrightarrow
\begin{document}

\markboth{J.~Berstel, L.~Boasson, O.~Carton, I.~Fagnot}{Minimization of automata}
\title{Minimization of automata} 
\author{Jean Berstel$^1$,
  Luc Boasson$^2$,
  Olivier Carton$^2$, Isabelle~Fagnot$^{1,*}$}
\address{$^1$Laboratoire d'Informatique Gaspard-Monge\\
  Universit\'e Paris-Est Marne-la-Vall\'ee\\
  5, boulevard Descartes, Champs-sur-Marne,
  F-77454 Marne-la-Vall\'ee Cedex 2\\[2mm]
  $^2$LIAFA\\
  $^*$Universit\'e Paris Diderot\\
  Case 7014,
  F-75205 Paris Cedex 13\\[2mm]
  email:\,\url{{berstel,fagnot}@univ-mlv.fr, 
    {boasson,carton}@liafa.jussieu.fr}\\[4mm]
  \upshape{\dateandtime}}

\maketitle\label{chapterBBC} 

\begin{classification}
  68Q45
\end{classification}

\begin{keywords}
  Finite automata, minimization, Hopcroft's algorithm.
\end{keywords}

\localtableofcontents

%
%

\section{Introduction}
This chapter is concerned with the design and analysis of algorithms
for minimizing finite automata. Getting a minimal automaton is a
fundamental issue in the use and implementation of finite automata
tools in frameworks like text processing, image analysis, linguistic
computer science, and many other applications.

There are two main families of minimization algorithms. The first by a
sequence of refinements of a partition of the set of states, the
second by a sequence of fusions or merges of states.  Among the
algorithms of the first family, we mention a simple algorithm described in
the book~\cite{Hopcroft&Ullman:1979}. It operates by a traversal of
the product of the automaton with itself, and therefore is in time and
space complexity $O(n^2)$. Other algorithms are Hopcroft's and Moore's
algorithms, which will be considered in depth later.  The linear-time
minimization of acyclic automata of Revuz belongs to the second
family. Brzozowski's algorithm stands quite isolated and fits in
neither of these two classes.

The algorithm for the minimization of complete deterministic finite
state automata given by Hopcroft~\cite{Hopcroft:1971} runs in
worst-case time $O(n\log n)$. It is, up to now, the most efficient
algorithm known in the general case. It has recently been extended to
incomplete deterministic finite automata
\cite{Valmari&Lehtinen:2008},\cite{Beal&Crochemore:2008}.

Hopcroft's algorithm is related to Moore's partition refinement
algorithm \cite{Moore:1956}, although it is different.  One of the
purposes of this text is the comparison of the nature of Moore's
and Hopcroft's algorithms. This gives some new insight into both
algorithms. As we shall see, these algorithms are quite different both
in behavior and in complexity. In particular, we show that it is not
possible to simulate the computations of one algorithm by the
other.

Moore's partition refinement algorithm is much simpler than Hopcroft's
algorithm. It has been shown \cite{Bassino&David&Nicaud:2009} that,
although its worst-case behavior is quadratic, its average running
time is $O(n\log n)$. No evaluation of the average is known for
Hopcroft's algorithm.

The family of algorithms based on fusion of states is important in
practice for the construction of minimal automata representing finite
sets, such as dictionaries in natural language processing. A linear
time implementation of such an algorithm for cycle-free automata was
given by Revuz~\cite{Revuz:1992}. This algorithm has been extended to
a more general class of automata by Almeida and
Zeitoun~\cite{Almeida&Zeitoun:2008}, namely to automata where all
strongly connected components are simple cycles. It has been
demonstrated in~\cite{Beal&Crochemore:2007} that minimization by state
fusion, which is not always possible, works well for local automata.

There is another efficient incremental algorithm for finite sets, by
Daciuk \textit{et al.}  \cite{Daciuk&Mihov&Watson&Watson:2000}. The
advantage of this algorithm is that it does not build the intermediate
trie which is rather space consuming.

We also consider updating a minimal automaton when a word is added or
removed from the set it recognizes.

Finally, we discuss briefly the case of nondeterministic automata. It
is well-known that minimal nondeterministic automata are not
unique. However, there are several subclasses where the minimal
automaton is unique.

We do not consider here the problem of constructing a minimal
automaton starting from another description of the regular language,
such as the synthesis of an automaton from a regular expression. We
also do not consider devices that may be more space efficient, such as
alternating automata or two-way automata.  Other cases not considered
here concern sets of infinite words and the minimization of their
accepting devices.

The chapter is organized as follows. The first section just fixes
notation, the next describes briefly Brzozowski's algorithm. In
Section~\ref{BBC:sec:moore}, we give basic facts on Moore's
minimization algorithm. Section~\ref{BBC:sec:hopcroft} is a detailed
description of Hopcroft's algorithm, with the proof of correctness and
running time. It also contains the comparison of Moore's and
Hopcroft's algorithms. The next section is devoted to so-called slow
automata. Some material in these two sections is new.

Sections~\ref{BBC:sec:fusion} and~\ref{BBC:sec:dynamic} are devoted to the family of algorithms
working by fusion. We describe in particular Revuz's algorithm and its
generalization by Almeida and Zeitoun, the incremental algorithm of
Daciuk \emph{et al.}, and dynamic minimization. The last section
contains miscellaneous results on special cases and a short discussion
of nondeterministic minimal automata.
%
%

\section{Definitions and notation}

It appears to be useful, for a synthetic presentation of the
minimization algorithms of Moore and of Hopcroft, to introduce some
notation for partitions of the set of states. This section just fixes
this notation. 

\paragraph*{Partitions and equivalence relations.}

A \emph{partition}\index{partition} of a set $E$ is a family $\cP$ of
nonempty, pairwise disjoint subsets of $E$ such that
$E=\bigcup\limits_{P\in\cP}P$. The \emph{index}\index{index of a
  partition} of the partition is the number of its elements.  A
partition defines an equivalence relation $\equiv_\cP$ on $E$.
Conversely, the set of all equivalence classes $[x]$, for $x\in E$, of
an equivalence relation on $E$ defines a partition of $E$.  This is
the reason why all terms defined for partitions have the same meaning
for equivalence relations and vice versa.

A subset $F$ of $E$ is \emph{saturated}\index{saturated subset} by
$\cP$ if it is the union of classes of $\cP$.  Let $\cQ$ be another
partition of $E$. Then $\cQ$ is a
\emph{refinement}\index{partition!refinement} of $\cP$, or $\cP$ is
\emph{coarser}\index{coarser partition}\index{partition!coarser} than
$\cQ$, if each class of $\cQ$ is contained in some class of $\cP$.  If
this holds, we write $\cQ\le\cP$. The index of $\cQ$ is then larger
than the index of $\cP$.

Given two partitions $\cP$ and $\cQ$ of a set $E$, we denote by
$\cU=\cP\wedge\cQ$ the coarsest partition which refines $\cP$ and
$\cQ$. The classes of $\cU$ are the nonempty sets $P\cap Q$, for
$P\in\cP$ and $Q\in\cQ$. The notation is extended to a set of
partitions in the usual way: we write
$\cP=\cP_1\wedge\cdots\wedge\cP_n$ for the common refinement of
$\cP_1,\ldots,\cP_n$. If $n=0$, then $\cP$ is the universal partition
of $E$ composed of the single class $E$. This partition is the neutral
element for the $\wedge$-operation.

Let $F$ be a subset of $E$. A partition $\cP$ of $E$ induces a
partition $\cP'$ of $F$ by intersection: $\cP'$ is composed of the
nonempty sets $P\cap F$, for $P\in\cP$. If $\cP$ and $\cQ$ are
partitions of $E$ and $\cQ\le\cP$, then the restrictions $\cP'$ and
$\cQ'$ to $F$ still satisfy $\cQ'\le\cP'$.

If $\cP$ and $\cP'$ are partitions of disjoint sets $E$ and $E'$, we
denote by $\cP\vee\cP'$ the partition of $E\cup E'$ whose restriction
to $E$ and $E'$ are $\cP$ and $\cP'$ respectively. So, one may write
\begin{displaymath}
  \cP=\bigvee_{P\in\cP}\{P\}\,.
\end{displaymath}

\paragraph*{Minimal automaton.}
We consider a deterministic automaton $\A=(Q,i,F)$ over the alphabet $A$ with
set of states $Q$, initial state $i$, and set of final states $F$.  To
each state $q$ corresponds a subautomaton of $\A$ obtained when $q$ is
chosen as the initial state. We call it the \textit{subautomaton
  rooted at}\index{subautomaton rooted at a state} $q$ or simply the
automaton at $q$. Usually, we consider only the trim part of this
automaton.  To each state $q$ corresponds a language $L_q(\A)$ which
is the set of words recognized by the subautomaton rooted at $q$, that
is
\begin{displaymath}
  L_q(\A)=\{w\in A^*\mid q\cdot w\in F\}\,.
\end{displaymath}
This language is called the \emph{future}\index{future of a
  state}\index{state!future} of the state $q$, or also the \emph{right
  language}\index{right language}\index{language!right} of this state.
Similarly one defines the \emph{past}\index{past of a
  state}\index{state!past} of $q$, also called the \emph{left
  language}\index{left language}\index{language!left}, as the set
$\{w\in A^*\mid i\cdot w= q\}$.  The automaton $\A$ is
\emph{minimal}\index{minimal automaton}\index{automaton!minimal}
if $L_p(\A)\ne L_q(\A)$ for each pair of distinct states $p,q$. The equivalence
relation $\equiv$ defined by
\begin{displaymath}
  p\equiv q \quad\text{if and only if}\quad L_p(\A)= L_q(\A)
\end{displaymath}
is a \emph{congruence}, that is $ p\equiv q$ implies $p\cdot a\equiv
q\cdot a$ for all letters $a$. It is called the \emph{Nerode
  congruence}\index{Nerode congruence}\index{equivalence!Nerode}. Note
that the Nerode congruence saturates the set of final states.  Thus an
automaton is minimal if and only if its Nerode equivalence is the
identity.

Minimizing an automaton is the problem of computing the Nerode
equivalence. Indeed, the \emph{quotient}\index{quotient
  automaton}\index{automaton!quotient} automaton $\A/{\equiv}$
obtained by taking for set of states the set of equivalence classes of
the Nerode equivalence, for the initial state the class of the initial
state $i$, for set of final states the set of equivalence
classes of states in $F$ and by defining the transition function by
$[p]\cdot a=[p\cdot a]$ accepts the same language, and its Nerode
equivalence is the identity. The minimal automaton recognizing a given
language is unique.

\paragraph*{Partitions and automata.}

Again, we fix a deterministic automaton $\A=(Q,i,F)$ over the alphabet
$A$.  It is convenient to use the shorthand $P^c$ for $Q\setminus P$
when $P$ is a subset of the set~$Q$.

Given a set $P\subset Q$ of states and a letter $a$, we denote by
$a^{-1}P$ the set of states $q$ such that $q\cdot a\in P$. Given sets
$P,R\subset Q$ and $a\in A$, we denote by
\begin{displaymath}
  (P,a)|R
\end{displaymath}
the partition of $R$ composed of the nonempty sets among the two sets
\begin{displaymath}
  R\cap a^{-1}P=\{q\in R\mid q\cdot a\in P\}
\quad\text{and}\quad 
R\setminus a^{-1}P=\{q\in R\mid q\cdot a\notin P\}\,.
\end{displaymath}
Note that $R\setminus a^{-1}P=R\cap (a^{-1}P)^c=R\cap a^{-1}(P^c)$ so
the definition is symmetric in $P$ and $P^c$. In particular
\begin{equation}
(P,a)|R=(P^c,a)|R\,.\label{BBC:eq:0}
\end{equation}
The pair $(P,a)$ is called a \emph{splitter}\index{splitter}\index{automaton!splitter}.
Observe that $(P,a)|R=\{R\}$ if either $R\cdot a\subset P$ or $R\cdot
a\cap P=\emptyset$, and $(P,a)|R$ is composed of two classes if both
$R\cdot a\cap P\ne\emptyset$ and $R\cdot a\cap P^c\ne\emptyset$ or
equivalently if $R\cdot a\not\subset P^c$ and $R\cdot a\not\subset P$.
If $(P,a)|R$ contains two classes, then we say that $(P,a)$
\emph{splits} $R$. Note that the pair $S=(P,a)$ is called a
splitter even if it does not split.

It is useful to extend the notation above to words. Given a word $w$
and sets $P,R\subset Q$ of states, we denote by $w^{-1}P$ the set of
states such that $q\cdot w\in P$, and by $(P,w)|R$ the partition of
$R$ composed of the nonempty sets among
\begin{displaymath}
  R\cap w^{-1}P=\{q\in R\mid q\cdot w\in P\}
  \quad\text{and}\quad
  R\setminus w^{-1}P=\{q\in R\mid q\cdot w\notin P\}\,.
\end{displaymath}
As an example, the partition $(F,w)|Q$ is the partition of $Q$ into
the set of those states from which $w$ is accepted, and the other
ones. A state $q$ in one of the sets and a state $q'$ in the other are
sometimes called \emph{separated}\index{separated
  states}\index{state!separated} by $w$.

The Nerode equivalence is the coarsest equivalence relation on the set
of states that is a (right) congruence saturating $F$. With the notation of
splitters, this can be rephrased as follows.

\begin{proposition}\label{BBC:prop:Nerode}
  The partition corresponding to the Nerode equivalence is the
  coarsest partition $\cP$ such that no splitter $(P,a)$, with
  $P\in\cP$ and $a\in A$, splits a class in $\cP$, that is such that
  $(P,a)|R=\{R\}$ for all $P,R\in\cP$ and $a\in A$.\qed
\end{proposition}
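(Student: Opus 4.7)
The plan is to recognize that the splitter-stability condition is simply a compact reformulation of the right-congruence property, and then to combine it with the saturation of $F$ (built into the definition of the Nerode congruence recalled just above the proposition) to recover the standard characterization of $\equiv$ as the coarsest congruence saturating $F$.

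First I would unpack the condition $(P,a)|R=\{R\}$. By definition of $(P,a)|R$, it has one class precisely when $R\cdot a\subset P$ or $R\cdot a\cap P=\emptyset$. If this holds for every $P\in\cP$, then since $\cP$ partitions $Q$, the set $R\cdot a$ must lie entirely inside one class of $\cP$. Conversely, this property obviously implies the splitter condition for every $P\in\cP$. So the stated condition is exactly the assertion that $\cP$ is a right congruence on $Q$, that is $p\equiv_{\cP}q$ implies $p\cdot a\equiv_{\cP}q\cdot a$ for all $a\in A$.

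Next I would verify both directions of the characterization. For the first direction, the Nerode partition is a right congruence (this was observed just before the proposition), hence it satisfies the splitter condition. For the second, I would show that any partition $\cP$ which is a right congruence and saturates $F$ refines the Nerode partition. Take $p,q$ in the same class of $\cP$; iterating the right-congruence property letter by letter along $w=a_1\cdots a_n$ gives $p\cdot w\equiv_{\cP}q\cdot w$, and saturation of $F$ then yields $p\cdot w\in F\iff q\cdot w\in F$. Hence $L_p(\A)=L_q(\A)$, so $p\equiv q$.

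The main obstacle is really only a matter of bookkeeping: the proposition as stated does not mention saturation of $F$ explicitly, but this is inherited from the preceding characterization of the Nerode equivalence as ``the coarsest equivalence relation on $Q$ that is a right congruence saturating $F$.'' Without that implicit proviso the universal partition $\{Q\}$ would trivially satisfy the splitter condition. Once this is kept in mind the proof is a direct translation between the two languages (splitters vs.\ congruences), and no further argument is required beyond the straightforward induction on $|w|$ sketched above.
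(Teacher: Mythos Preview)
Your proposal is correct and matches the paper's intent: the proposition is stated there with an inline \qedsymbol\ and no proof, since the paper regards it as an immediate restatement of the sentence just before it (``the Nerode equivalence is the coarsest equivalence relation on the set of states that is a right congruence saturating~$F$''). Your unpacking of $(P,a)|R=\{R\}$ for all $P\in\cP$ as equivalent to $R\cdot a$ lying in a single class of~$\cP$, hence to the right-congruence property, is exactly the translation the paper has in mind, and your remark that saturation of~$F$ must be carried over implicitly (lest the universal partition $\{Q\}$ qualify) is a pertinent observation that the paper leaves unsaid.
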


We use later the following lemma which is already given in
Hopcroft's paper \cite{Hopcroft:1971}. It is the basic observation
that  ensures that Hopcroft's algorithm works correctly.
\begin{lemma}\label{BBC:le:magique}
  Let $P$ be a set of states, and let $\cP=\{P_1,P_2\}$ be a partition of
  $P$. For any letter $a$ and for any set of states  $R$, one has
  \begin{displaymath}
    (P,a)|R\wedge(P_1,a)|R=(P,a)|R\wedge(P_2,a)|R=(P_1,a)|R\wedge(P_2,a)|R\,,
  \end{displaymath}
  and consequently
  \begin{gather}
    (P,a)|R\ge (P_1,a)|R\wedge (P_2,a)|R\label{BBC:eq:magique1}\,,\\
    (P_1,a)|R\ge (P,a)|R\wedge (P_2,a)|R\label{BBC:eq:magique2}\,.
  \end{gather}
\end{lemma}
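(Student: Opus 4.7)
The plan is to unfold all three common refinements explicitly and observe that they reduce to the same partition of $R$. Introduce abbreviations $X = R \cap a^{-1}P$, $X_i = R \cap a^{-1}P_i$ for $i=1,2$, and $Y = R \setminus a^{-1}P$, so that, ignoring empty parts, $(P,a)|R = \{X, Y\}$ and $(P_i,a)|R = \{X_i, R \setminus X_i\}$.

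The key input is that $P = P_1 \sqcup P_2$ gives $a^{-1}P = a^{-1}P_1 \sqcup a^{-1}P_2$, hence $X = X_1 \sqcup X_2$ and $Y = (R\setminus X_1) \cap (R\setminus X_2)$. I would then compute each of the three meets by listing the four pairwise intersections of the classes of the two partitions and dropping the empties. For $(P,a)|R \wedge (P_1,a)|R$ the intersections are $X \cap X_1 = X_1$, $X \cap (R\setminus X_1) = X_2$, $Y \cap X_1 = \emptyset$, and $Y \cap (R\setminus X_1) = Y$, so this meet is the nonempty subfamily of $\{X_1, X_2, Y\}$. The computation for $(P,a)|R \wedge (P_2,a)|R$ is symmetric and produces the same family. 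For $(P_1,a)|R \wedge (P_2,a)|R$ one gets $X_1 \cap X_2 = \emptyset$, $X_1 \cap (R\setminus X_2) = X_1$, $(R\setminus X_1) \cap X_2 = X_2$, and $(R\setminus X_1) \cap (R\setminus X_2) = Y$, again yielding $\{X_1, X_2, Y\}$ after cleanup.

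This establishes the chain of equalities in the lemma. The two inequalities \eqref{BBC:eq:magique1} and \eqref{BBC:eq:magique2} then follow at once from the general fact that a meet $\cP \wedge \cQ$ refines each of its arguments (so $\cP \ge \cP\wedge\cQ$), combined with the equalities just proved. There is no real obstacle here; the only point that requires care is that the notation $(\cdot,a)|R$ discards empty classes, so the four raw intersections collapse to the same three-class partition only after that cleanup. One could alternatively invoke \eqref{BBC:eq:0} to shorten a few steps by using symmetry in $P$ versus $P^c$, but the direct enumeration above is already so mechanical that no such shortcut seems worthwhile.
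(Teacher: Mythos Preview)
Your argument is correct. The decomposition $X = X_1 \sqcup X_2$ and $Y = (R\setminus X_1)\cap(R\setminus X_2)$ does follow from $a^{-1}P = a^{-1}P_1 \sqcup a^{-1}P_2$, which in turn holds because the next-state function is single-valued; the four-by-four intersection tables you wrote out are accurate, and the passage from the displayed equalities to the two inequalities is immediate.

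As for comparison with the paper: there is nothing to compare. The paper states the lemma and attributes it to Hopcroft's original article \cite{Hopcroft:1971} without supplying a proof. Your direct enumeration is exactly the kind of verification one would expect here, and it stands on its own.
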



%
%

\section{Brzozowski's algorithm} \label{BBC:sec:brzozo}

The minimization algorithm given by Brzozowski
\cite{Brzozowski:1963} is quite different from the two families of
iterative algorithms (by refinement and by fusion) that we consider in
this chapter. Although its worst-case behavior is exponential, it is
conceptually simple, easy to implement, and it is quite efficient in
many cases. Moreover, it does not require the  automaton to be
deterministic, contrary to the algorithms described later.

Given an automaton $\A=(Q,I,F,E)$ over an alphabet $A$, its
\emph{reversal}\index{automaton!reversal}\index{reversal!automaton} is
the automaton denoted $A^R$ obtained by exchanging the initial and the
final states, and by inverting the orientation of the edges.
Formally $\A^R=(Q,F,I,E^R)$, where $E^R=\{(p,a,q)\mid (q,a,p)\in E\}$.
The basis for Brzozowski's algorithm is the following result.

\begin{figure}
\centering\scalebox{0.85}{%
  \begin{picture}(60,20)(0,-5)
    \gasset{Nh=6,Nadjust=w,Nadjustdist=2}
    \node[Nmarks=i,iangle=-90](0)(0,0){$0$}
    \node(1)(15,0){$1$}
    \node(2)(30,0){$2$}
    \node[Nframe=n](cdot)(45,0){$\cdots$}
    \node[Nmarks=f,fangle=-90](n)(60,0){$n$}
    \drawedge(0,1){$a$}
    \drawedge(1,2){$a,b$}
    \drawedge(2,cdot){$a,b$}
    \drawedge(cdot,n){$a,b$}
    \drawloop[loopangle=90](0){$a,b$}
       \end{picture}
\qquad\qquad
  \begin{picture}(60,20)(0,-5)
    \gasset{Nh=6,Nadjust=w,Nadjustdist=2}
    \node[Nmarks=f,fangle=-90](0)(0,0){$0$}
    \node(1)(15,0){$1$}
    \node(2)(30,0){$2$}
    \node[Nframe=n](cdot)(45,0){$\cdots$}
    \node[Nmarks=i,iangle=-90](n)(60,0){$n$}
    \drawedge(1,0){$a$\vphantom{,b}}
    \drawedge(2,1){$a,b$}
    \drawedge(cdot,2){$a,b$}
    \drawedge(n,cdot){$a,b$}
    \drawloop[loopangle=90](0){$a,b$}
  \end{picture}}

  \caption{The automaton on the left recognizing the language
    $A^*aA^n$. It has $n+1$ states and the minimal deterministic
  automaton for this language has $2^n$ states. The automaton on the
  right is its reversal. It is minimal and recognizes $A^naA^*$.}
  \label{BBC:fig:nondetA}
\end{figure}
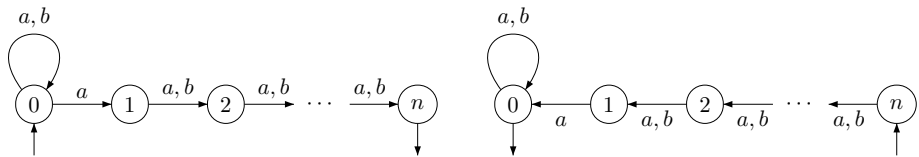

\begin{proposition}
  Let $\A$ be a finite deterministic automaton, and let $\A^\sim$ be the
  deterministic trim automaton obtained by determinizing and trimming
  the reversal $\A^R$. Then $\A^\sim$ is minimal.
\end{proposition}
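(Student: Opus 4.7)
The plan is to identify each state of $\A^\sim$ with a subset of $Q$, to compute its future (right language) explicitly in terms of $\A$, and then to verify that distinct states have distinct futures --- which, since $\A^\sim$ is by construction a trim deterministic automaton, is exactly minimality (cf.\ the characterisation via the Nerode congruence given earlier).

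First I would describe the accessible subsets. In the subset construction applied to $\A^R$, the initial subset is $F$ and on letter $a$ a subset $S$ is sent to $a^{-1}S=\{q\in Q\mid q\cdot a\in S\}$, where $\cdot$ denotes the transition of $\A$. A straightforward induction on $|w|$ shows that reading $w$ from $F$ yields
\[
S_w \;=\; \{\,q\in Q \mid q\cdot w^R\in F\,\},
\]
where $w^R$ is the reversal of $w$. In particular, every accessible state of the determinization has this form, and $S_w$ is final iff it contains~$i$. Applying the same computation but starting from an arbitrary such subset $S$, reading $v$ from $S$ leads to $\{q:q\cdot v^R\in S\}$, which contains $i$ iff $i\cdot v^R\in S$. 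Hence the future of a state $S$ of $\A^\sim$ is
\[
L_S(\A^\sim) \;=\; \{\,v\in A^* \mid i\cdot v^R\in S\,\},
\]
and the trimming step removes exactly those $S$ that meet no state accessible from $i$ in~$\A$.

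To conclude, I would pick two distinct states $S_1\ne S_2$ of $\A^\sim$ and, say, an element $p\in S_1\setminus S_2$. Assuming $\A$ is accessible, write $p=i\cdot u$ and set $v=u^R$: then $i\cdot v^R=p$ lies in $S_1$ but not in $S_2$, so $v\in L_{S_1}(\A^\sim)\setminus L_{S_2}(\A^\sim)$, proving that distinct states have distinct futures. The main delicate point is exactly this last step: the separating word exists only because $p$ is accessible in $\A$. This is automatic under the standing convention of the paper that $\A$ is accessible (otherwise one would first discard its inaccessible states); without such a hypothesis, two distinct subsets of $\det(\A^R)$ could differ only on non-accessible states of~$\A$, share the same future, and the conclusion would fail.
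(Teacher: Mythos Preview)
Your argument is correct and is essentially the standard proof of this result. However, there is nothing to compare it against here: the paper does not prove this proposition at all. Immediately after the statement it writes ``For a proof of this proposition, see for instance Sakarovitch's book~\cite{Sakarovitch:2009}'' and moves on to the corollary. So your proposal supplies a proof where the paper deliberately omits one.

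For the record, the approach you give --- identify the states of the subset automaton with sets $S_w=\{q\mid q\cdot w^R\in F\}$, compute the future of $S$ as $\{v\mid i\cdot v^R\in S\}$, and separate two distinct reachable subsets via an accessibility witness in~$\A$ --- is exactly the classical argument (it is, up to presentation, what one finds in Sakarovitch and elsewhere). Your remark about accessibility is well taken: the statement as printed does not include that hypothesis, and without it the conclusion can fail (two reachable subsets of $\det(\A^R)$ may differ only on states unreachable in~$\A$ and hence share the same future). Restricting to the accessible part of~$\A$, or reading ``deterministic automaton'' in the usual trim sense, repairs this, and that is clearly the intended reading.
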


For a proof of this proposition, see for instance Sakarovitch's book
\cite{Sakarovitch:2009}.  The minimization algorithm now is just a
double application of the operation. Observe that the automaton one
starts with need not to be deterministic.

\begin{corollary}
  Let $\A$ be a finite automaton. Then $(\A^\sim)^\sim$ is the minimal
  automaton of $\A$.
\end{corollary}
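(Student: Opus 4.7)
The plan is to apply the preceding proposition twice, once I have verified the two prerequisites: that the double-tilde construction preserves the language of $\A$, and that the intermediate automaton $\A^\sim$ satisfies the hypotheses of the proposition (finite and deterministic). Since the corollary allows $\A$ to be nondeterministic, the first application of $\sim$ is really about reducing to the deterministic case; then the proposition itself does the minimization.

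First I would track what language each stage accepts. Reversal of edges and swapping of initial/final states satisfy $L(\A^R) = L(\A)^R$, and both subset construction and trimming preserve the accepted language. Hence $L(\A^\sim) = L(\A)^R$. Iterating, $L((\A^\sim)^\sim) = (L(\A)^R)^R = L(\A)$, so the language is correct.

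Next I would verify the hypotheses of the proposition when applied to $\A^\sim$. By construction $\A^\sim$ is deterministic (being the output of the subset construction), trim, and finite: finiteness follows because $\A^R$ has the same finite state set as $\A$, so the subset construction yields at most $2^{|Q|}$ states, and trimming can only decrease this. Thus the proposition applies with $\A^\sim$ in place of $\A$, giving that $(\A^\sim)^\sim$ is minimal.

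Combining the two ingredients, $(\A^\sim)^\sim$ is a minimal automaton recognizing $L(\A)$. Since the minimal automaton of a regular language is unique (as noted in the previous section), $(\A^\sim)^\sim$ is the minimal automaton of $\A$. There is no real obstacle here; the only subtle point worth being careful about is that the proposition is stated for deterministic input, so one must not try to apply it directly to $\A$ but only to $\A^\sim$, where determinism is guaranteed by the subset construction in the first $\sim$ step.
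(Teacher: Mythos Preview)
Your proposal is correct and follows exactly the approach the paper intends: the paper states the corollary without proof, remarking only that ``the minimization algorithm now is just a double application of the operation'' and that ``the automaton one starts with need not to be deterministic.'' Your argument spells out precisely these two points---that the first application of $\sim$ produces a finite deterministic automaton so that the proposition applies on the second pass, and that the language is preserved throughout---so you have supplied the details the paper leaves implicit.
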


\begin{example}
  We consider the automata given in Figure~\ref{BBC:fig:nondetA} over
  the alphabet $A=\{a,b\}$.  Each automaton is the reversal of the
  other. However, determinization of the automaton on the left
  requires exponential time and space.

\end{example}

%
%

\section{Moore's algorithm} \label{BBC:sec:moore}

The minimization algorithm given by Moore \cite{Moore:1956} computes
the Nerode equivalence by a stepwise refinement of some initial
equivalence. All automata are assumed to be deterministic.

\subsection{Description}

Let $\A=(Q,i,F)$ be an automaton over an alphabet $A$. Define, for
$q\in Q$ and $h\ge0$, the set
\begin{displaymath}
  L_q^{(h)}(\A)=\{w\in A^*|\ |w|\le h,\ q\cdot w\in F\}\,.
\end{displaymath}
The \emph{Moore equivalence}\index{equivalence!Moore}\index{Moore
  equivalence} of order $h$ is the equivalence $\equiv_h$ on $Q$
defined by
\begin{displaymath}
p\equiv_h q\iff L_p^{(h)}(\A)=L_q^{(h)}(\A)\,.
\end{displaymath}
Using the notation of partitions introduced above, one can rephrase the
definitions of the Nerode equivalence and of the Moore equivalence of
order $h$. These are the equivalences defined by
\begin{displaymath}
  \bigwedge_{w\in A^*}(F,w)|Q\,,\quad\text{and}\quad
  \bigwedge_{w\in A^*,\ |w|\le h}(F,w)|Q\,.
\end{displaymath}
Since the set of states is finite, there is a smallest $h$ such that
the Moore equivalence $\equiv_h$ equals the Nerode equivalence
$\equiv$.  We call this integer the \emph{depth}%
\index{depth of Moore's algorithm}\index{Moore algorithm!depth}%
\index{automaton!depth} of Moore's algorithm on the finite automaton
$\A$, or the depth of $\A$ for short.  The depth depends in fact only
on the language recognized by the automaton, and not on the particular
automaton under consideration. Indeed, each state of an automaton
recognizing a language $L$ represents in fact a left quotient
$u^{-1}L$ for some word $u$.

The depth is the smallest $h$ such that $\equiv_h$ equals
$\equiv_{h+1}$.  This leads to the refinement algorithm that computes
successively $\equiv_0$, $\equiv_1$, \dots, $\equiv_h$, \dots, halting
as soon as two consecutive equivalences are equal. The next property
gives a method to compute the Moore equivalences efficiently.

\begin{proposition}\label{BBC:prop:Mooreiteratif}
  For two states $p, q\in Q$, and $h\ge 0$, one has
  \begin{equation}
    \label{BBC:eq:Mooreiteratif}
    p\equiv_{h+1}q\quad\iff\quad p\equiv_h q\ \ \text{and}\ \ p\cdot a\equiv_h
    q\cdot a \text{ for all $a\in A$}\,.
  \end{equation}
\end{proposition}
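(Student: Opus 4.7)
The plan is to unpack both sides using the definition of $L_q^{(h)}(\A)$ and to split words by their length. The key observation is that $L_q^{(h+1)}(\A)$ decomposes cleanly according to whether a word is empty or starts with some letter $a$: for any nonempty $w = av$ with $|v| \le h$, one has $q \cdot w \in F$ iff $(q \cdot a) \cdot v \in F$, which translates membership of $w$ in $L_q^{(h+1)}(\A)$ into membership of $v$ in $L_{q\cdot a}^{(h)}(\A)$.

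For the forward implication, I would assume $L_p^{(h+1)}(\A) = L_q^{(h+1)}(\A)$. Intersecting both sides with $A^{\le h}$ gives $L_p^{(h)}(\A) = L_q^{(h)}(\A)$, whence $p \equiv_h q$. Then, for each letter $a$, I would show $L_{p\cdot a}^{(h)}(\A) = L_{q\cdot a}^{(h)}(\A)$ by the identity $v \in L_{q\cdot a}^{(h)}(\A) \iff av \in L_q^{(h+1)}(\A)$, valid whenever $|v| \le h$.

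For the backward implication, I would assume $p \equiv_h q$ and $p \cdot a \equiv_h q \cdot a$ for all $a \in A$, and show that any $w$ with $|w| \le h+1$ lies in $L_p^{(h+1)}(\A)$ iff it lies in $L_q^{(h+1)}(\A)$. If $w = \varepsilon$, this follows from $p \equiv_h q$ (as $\varepsilon \in L_p^{(h)}(\A)$ iff $p \in F$, and similarly for $q$). If $w = av$ with $|v| \le h$, rewriting membership in $L_p^{(h+1)}(\A)$ as $v \in L_{p\cdot a}^{(h)}(\A)$ and using the hypothesis $p\cdot a \equiv_h q\cdot a$ concludes the case.

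No step looks like a real obstacle; the whole proof is a careful bookkeeping of word lengths. The only point one has to be mildly careful about is remembering to treat the empty word separately in the backward direction, since it is the only word of length $\le h+1$ that does not admit a factorization as $av$ with $|v| \le h$, and this is where the hypothesis $p \equiv_h q$ (rather than the family of hypotheses on successors) is actually used.
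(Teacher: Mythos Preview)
Your argument is correct and is exactly the standard unpacking of the definition. The paper does not actually give a proof of this proposition: it is stated as an elementary fact and immediately used, so there is nothing to compare against beyond noting that your write-up supplies the routine verification the paper omits.
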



We use this proposition in a slightly different formulation. Denote by
$\cM_h$ the partition corresponding to the Moore equivalence
$\equiv_h$. Then the following equations hold.
\begin{proposition}\label{BBC:prop:Moore}
  For $h\ge0$, one has
  \begin{displaymath}
    \cM_{h+1}= \cM_h\wedge\bigwedge_{a\in
      A}\bigwedge_{P\in\cM_h}(P,a)|Q\,
    = \bigvee_{R\in\cM_{h}}
    \Bigl(\bigwedge_{a\in A}\bigwedge_{P\in\cM_{h}}(P,a)|R\Bigr)\,.\eqed
  \end{displaymath}
\end{proposition}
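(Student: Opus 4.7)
The plan is to derive both equalities directly from Proposition~\ref{BBC:prop:Mooreiteratif} together with two elementary facts about partitions: (i) meets of partitions commute with restriction to a subset, and (ii) the wedge $\cM_h\wedge\cT$ of a partition $\cM_h$ of $Q$ with any other partition $\cT$ of $Q$ equals the disjoint union $\bigvee_{R\in\cM_h}\cT|_R$ of the restrictions of $\cT$ to the blocks of $\cM_h$.

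For the \emph{first equality}, I would unpack the definitions on states. Two states $p,q$ lie in the same class of $\cM_h\wedge\bigwedge_{a\in A}\bigwedge_{P\in\cM_h}(P,a)|Q$ if and only if
\begin{conditionsiii}
\item $p\equiv_h q$, i.e.\ $p$ and $q$ belong to the same block of $\cM_h$, and
\item for every $a\in A$ and every $P\in\cM_h$, $p$ and $q$ lie in the same class of $(P,a)|Q$, i.e.\ $p\cdot a\in P\Leftrightarrow q\cdot a\in P$.
\end{conditionsiii}
Since the blocks $P\in\cM_h$ partition $Q$, condition (ii) is exactly the statement that $p\cdot a$ and $q\cdot a$ lie in the same block of $\cM_h$ for every $a\in A$, i.e.\ $p\cdot a\equiv_h q\cdot a$ for all $a$. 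By Proposition~\ref{BBC:prop:Mooreiteratif}, this conjunction characterizes $p\equiv_{h+1}q$, yielding the first equality.

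For the \emph{second equality}, I would apply the general fact (ii): for any partition $\cT$ of $Q$,
\begin{displaymath}
  \cM_h\wedge\cT \;=\; \bigvee_{R\in\cM_h}\cT|_R,
\end{displaymath}
because the nonempty intersections $M\cap T$ with $M\in\cM_h$ and $T\in\cT$ are, grouped by $M=R$, precisely the nonempty sets $R\cap T$, which form $\cT|_R$. I take $\cT=\bigwedge_{a\in A}\bigwedge_{P\in\cM_h}(P,a)|Q$. Then fact (i) says that the restriction of a finite meet is the meet of the restrictions, and directly from the definition of the splitter notation one checks that $\bigl((P,a)|Q\bigr)\big|_R=(P,a)|R$. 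Combining these two observations gives
\begin{displaymath}
  \cT|_R \;=\; \bigwedge_{a\in A}\bigwedge_{P\in\cM_h}(P,a)|R,
\end{displaymath}
which, substituted into the identity above, yields the second equality.

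The work is essentially bookkeeping; the only mild subtlety is checking the two auxiliary facts on partitions, and in particular that the restriction of $(P,a)|Q$ to a subset $R\subset Q$ is exactly $(P,a)|R$ (the nonempty sets $R\cap a^{-1}P$ and $R\setminus a^{-1}P$), which follows by intersecting with $R$ the two defining sets of $(P,a)|Q$. No genuine obstacle is anticipated.
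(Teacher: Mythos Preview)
Your proposal is correct. The paper does not actually give a proof of this proposition: the \texttt{\textbackslash eqed} at the end of the displayed formula places a QED symbol directly after the statement, signalling that the authors regard it as an immediate reformulation of Proposition~\ref{BBC:prop:Mooreiteratif} and of the partition notation introduced earlier. Your argument supplies precisely the routine verification they omit --- unpacking the meet on states to recover the characterization \eqref{BBC:eq:Mooreiteratif}, and then using the two bookkeeping facts that restriction to a block commutes with finite meets and that $\cM_h\wedge\cT=\bigvee_{R\in\cM_h}\cT|_R$. There is nothing to compare approaches against; your write-up is exactly the kind of justification the paper leaves implicit.
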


\begin{figure}
  \hrule\smallskip\par
  \setlength{\commentspace}{5cm}
  \begin{algorithmic}
    \FUNC{Moore$(\A)$}
      \STATE\algcomment{-1}{The initial partition}$\cP\gets\{F, F^c\}$ 
      \REPEAT
       \STATE\algcomment{0}{$\cP'$ is the old partition,
         $\cP$ is  the new one}$\cP'\gets\cP$
       \FORALL{$a \in A$} 
         \STATE $\cP_a\gets \bigwedge_{P\in\cP}(P,a)|Q$
       \ENDFOR
       \STATE $\cP\gets\cP\wedge\bigwedge_{a\in A}\cP_a$
     \UNTIL $\cP=\cP'$
\end{algorithmic}
\hrule\smallskip\par
\caption{Moore's minimization algorithm.}\label{BBC:alg:Moore}
\end{figure}

\noindent 
The computation is described in Figure~\ref{BBC:alg:Moore}. It is
realized by a loop that refines the current partition. The computation
of the refinement of $k$ partitions of a set swith $n$ elements can be
done in time $O(kn^2)$ by brute force. A radix sort improves the
running time to $O(kn)$. With $k=\Card(A)$, each tour in the loop is
realized in time $O(kn)$, so the total time is $O(\ell kn)$, where
$\ell$ is the number of refinement steps in the computation of the
Nerode equivalence $\equiv$, that is the depth of the automaton.

The worst case behavior is obtained for $\ell=n-2$. We say that
automata having maximal depth are \emph{slow}\index{slow
  automaton}\index{automaton!slow} and more precisely are \emph{slow
  for Moore}\index{slow for Hopcroft}\index{Hopcroft!automaton slow}
automata. These automata are investigated later. We will show that
they are equivalent to automata we call \textit{slow for
  Hopcroft}\index{slow for Moore}\index{Moore!automaton slow}.

\subsection{Average complexity}

The average case behavior of Moore's algorithm has recently been
studied in several papers. We report here some results given in
\cite{Bassino&David&Nicaud:2009,David:2010}. The authors make a
detailed analysis of the distribution of the number $\ell$ of
refinement steps in Moore's algorithm, that is of the depth of
automata, and they prove that there are only a few automata for which
this depth is larger than $\log n$.

More precisely, fix some alphabet and we consider deterministic
automata over this alphabet.

A \emph{semi-automaton}\index{semi-automaton}\index{automaton!semi}
$\cK$ is an automaton whose set of final states is not specified.
Thus, an automaton is a pair $(\cK,F)$, where $\cK$ is a
semi-automaton and $F$ is the set of final states. The following
theorem given an upper bound on the average complexity of Moore's
algorithm
for all automata derived from a given semiautomaton. 

\begin{theorem}[Bassino, David, Nicaud
   \cite{Bassino&David&Nicaud:2009}]\label{BBC:BassinoDavidNicaud}
   Let $\cK$ be a semi-automaton with $n$ states. The average
   complexity of Moore's algorithm for the automata $(\cK, F)$, for
   the uniform probability distribution over the sets $F$ of final
   states, is $O(n\log n)$.
\end{theorem}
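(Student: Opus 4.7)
The plan is to reduce the running-time bound to an expected-depth bound, and then control the depth by a union bound over pairs of states combined with a probabilistic argument on orbits in the product semi-automaton. From the complexity analysis following Proposition~\ref{BBC:prop:Moore}, one iteration of Moore's algorithm on $(\cK,F)$ costs $O(\Card(A)\cdot n)$, so the total running time is $O(\ell(F)\cdot\Card(A)\cdot n)$, where $\ell(F)$ is the depth of the algorithm. Since $\Card(A)$ is fixed, the theorem reduces to showing $\mathbb{E}_F[\ell(F)]=O(\log n)$ when $F$ is drawn uniformly from the subsets of $Q$.

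For this, I would write $\mathbb{E}_F[\ell(F)]\le h_0+\sum_{h>h_0}\Pr_F[\ell(F)\ge h]$ with $h_0=C\log_2 n$, and reduce to a per-pair estimate. Since $\ell(F)\ge h$ iff some pair $p\ne q$ is Moore-equivalent at level $h-1$ but not Nerode-equivalent under $F$, a union bound over the at most $n^2$ pairs converts the problem into showing $\Pr_F[p\equiv_{h-1}^F q\text{ and }p\not\equiv^F q]=O(2^{-h})$ for each pair. Now fix $(p,q)$ and analyse the orbit $O_h=\{\{p\cdot w,q\cdot w\}:|w|\le h\}$ of unordered pairs in the product semi-automaton, together with the graph $G_h$ on $Q$ whose edges are the non-diagonal pairs of $O_h$. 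Writing $T_h$ for the set of non-isolated vertices of $G_h$ and $c_h$ for the number of connected components of $G_h[T_h]$, the event $p\equiv_h^F q$ is exactly that $F$ is constant on each such component, so $\Pr_F[p\equiv_h^F q]=2^{c_h-|T_h|}$, and therefore
\[
\Pr_F[p\equiv_{h-1}^F q\text{ and }p\not\equiv^F q]
  =2^{c_{h-1}-|T_{h-1}|}-2^{c_\infty-|T_\infty|}
  \le 2^{-(|T_{h-1}|-c_{h-1})}.
\]

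The technical heart of the proof---and the main obstacle---is a linear-growth estimate for the rank $|T_h|-c_h$: one needs $|T_h|-c_h\ge\min(h+1,\,|T_\infty|-c_\infty)$ in order to conclude that the per-pair probability is $O(2^{-h})$. The naive bound $|T_h|-c_h\ge 1$ is insufficient, and adding a new edge to $G_h$ can merely create a cycle without increasing the rank. The intuition is that as long as the orbit has not saturated, the BFS layer added at step $h+1$ in the product semi-automaton contains at least one pair that either joins two previously disjoint components of $G_h$ or attaches a fresh vertex to $T_h$, thereby strictly increasing the rank; making this combinatorial claim precise is the technical content of~\cite{Bassino&David&Nicaud:2009}. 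Granted the rank estimate, summing the per-pair bound over the $n^2$ pairs and over $h>h_0$ (with $C$ sufficiently large, e.g.\ $C=3$) makes the tail $O(1)$, so $\mathbb{E}_F[\ell(F)]=h_0+O(1)=O(\log n)$, yielding the claimed $O(n\log n)$ average complexity.
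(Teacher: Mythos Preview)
The paper does not prove this theorem in the text; it is stated as a cited result from~\cite{Bassino&David&Nicaud:2009} with no argument supplied. So there is no in-chapter proof to compare against.

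That said, your sketch is essentially a correct proof and goes well beyond what the chapter offers. The reduction to $\mathbb{E}_F[\ell(F)]=O(\log n)$, the tail-sum decomposition, the union bound over pairs, and the identity $\Pr_F[p\equiv_h^F q]=2^{-(|T_h|-c_h)}$ are all sound. The one place where you undersell yourself is the ``rank estimate'' $|T_h|-c_h\ge\min(h+1,\,|T_\infty|-c_\infty)$: you flag it as the main obstacle and defer it to the reference, but it has a short self-contained proof that you could include. Writing $\approx_h$ for the equivalence on~$Q$ generated by the edges of~$G_h$, the estimate is equivalent to the implication ``$\approx_{h+1}=\approx_h\ \Rightarrow\ \approx_h=\approx_\infty$''. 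To see this, note that every edge $\{u,v\}$ of $G_h$ yields an edge (or a diagonal) $\{u\cdot a,v\cdot a\}$ of $G_{h+1}$; hence, assuming $\approx_{h+1}=\approx_h$, one gets that $u\approx_h v$ implies $u\cdot a\approx_h v\cdot a$ for every letter~$a$ (chain through a $G_h$-path and use transitivity). This makes $\approx_h$ a right congruence on the orbit, so every edge of $G_{h+2}$, being of the form $\{u\cdot a,v\cdot a\}$ with $\{u,v\}\in G_{h+1}$, lies inside an $\approx_h$-class. Thus $\approx_{h+2}=\approx_h$, and by induction $\approx_\infty=\approx_h$. With this lemma in hand, your bound $\Pr_F[\ell(F)\ge h]\le n^2\cdot 2^{-h}$ is rigorous (the case where the rank has saturated gives probability~$0$, not merely $2^{-r_\infty}$), and summing over $h>3\log_2 n$ yields $\mathbb{E}_F[\ell(F)]=O(\log n)$ as you claim.
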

The result also holds for Bernoulli distributions for final
states. The result remains valid for subfamilies of automata such as
strongly connected automata or group automata.

When all semi-automata are considered to be equally like, then the
following bound is valid.

\begin{theorem}[David \cite{David:2010}]\label{BBC:theoremDavid}
  The average complexity  of Moore's algorithm, for the uniform
  probability over all complete automata with $n$ states, is
  $O(n\log\log n)$.
\end{theorem}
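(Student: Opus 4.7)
Each iteration of Moore's main loop runs in $O(n)$ time when the alphabet is fixed, so the total running time on an automaton with $n$ states is $O(n\ell)$, where $\ell$ is its depth. It therefore suffices to establish $\mathbb{E}[\ell] = O(\log\log n)$ for a uniformly random complete deterministic automaton.

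The first step is a deterministic upper bound on the growth of Moore's partitions. Writing $n_h$ for the number of classes of $\cM_h$, the class of a state $q$ in $\cM_{h+1}$ is determined by the tuple $([q]_{\cM_h}, ([q\cdot a]_{\cM_h})_{a \in A})$, which takes at most $n_h^{|A|+1}$ values; hence $n_{h+1} \leq n_h^{|A|+1}$. Starting from $n_0 = 2$, this iterates to $n_h \leq 2^{(|A|+1)^h}$. In particular, if the refinement grew at its maximum rate, the partition would reach any fixed-point size of order $n$ in just $O(\log\log n)$ steps. The whole task is to show that a typical random automaton does attain a comparable growth rate.

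The heart of the argument is a probabilistic lower bound on this growth. With the transitions $\delta_a : Q \to Q$ and the final set $F$ chosen independently and uniformly at random, I would view the map $q \mapsto ([q\cdot a]_{\cM_h})_{a\in A}$, restricted to a class $C$ of $\cM_h$, as a random assignment of $C$ into a set of size $n_h^{|A|}$. Once $n_h$ is large enough that $|C|^2 \ll n_h^{|A|}$ for most classes $C$, standard birthday estimates force this assignment to be almost injective on $C$, so $C$ splits into $\Omega(|C|)$ subclasses of $\cM_{h+1}$. Summing over classes of $\cM_h$ gives, with high probability, a recursion of the form $n_{h+1} \geq \min(n^\ast, n_h^{1+\alpha})$ for some constant $\alpha > 0$, where $n^\ast$ is the number of Nerode classes. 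Iterating this from $n_0 = 2$, the depth at which $n_h$ reaches $n^\ast$ is $O(\log\log n^\ast) = O(\log\log n)$ with high probability.

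The main obstacle is the subtle dependence between the partitions $\cM_h$ and the transitions, since $\cM_h$ is itself a function of $\delta$. I would resolve this by a principle-of-deferred-decisions argument, revealing only enough of $\delta$ at each stage to determine $\cM_h$ and keeping the remaining transitions uniform for the next stage. Two further technicalities remain: first, a bootstrap handling the early regime in which $n_h$ is too small for the birthday bound, based on the initial $2^{|A|+1}$-fold growth available at the first step of Proposition~\ref{BBC:prop:Moore}; and second, control of atypical automata, for which only the trivial bound $\ell \leq n$ applies. The event $\{\ell > C\log\log n\}$ must have probability $O(1/n)$ so that its contribution to the average is $O(1)$; this follows from the same concentration used above. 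Combining these ingredients yields $\mathbb{E}[\ell] = O(\log\log n)$, and hence the claimed $O(n\log\log n)$ average complexity.
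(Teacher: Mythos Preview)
The paper does not prove this theorem: it is merely quoted from David~\cite{David:2010} and immediately followed by the lower-bound companion statement, with no argument given. So there is no ``paper's own proof'' to compare against; your sketch must stand on its own.

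The overall shape of your argument is reasonable --- reduce to bounding the depth $\ell$, and show $\ell=O(\log\log n)$ with probability $1-O(1/n)$ --- but the central probabilistic step does not work as you describe it. Your plan is to ``reveal only enough of $\delta$ at each stage to determine $\cM_h$ and keep the remaining transitions uniform for the next stage.'' This is impossible here. Passing from $\cM_h$ to $\cM_{h+1}$ requires knowing, for \emph{every} state $q$ and \emph{every} letter $a$, the $\cM_h$-class of $q\cdot a$ (this is exactly Proposition~\ref{BBC:prop:Mooreiteratif}). Already at $h=1$ you have consumed one bit of information about every single transition (namely, whether it lands in $F$); at $h=2$ you need finer information about the \emph{same} transitions $q\cdot a$, not about fresh ones. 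There is no deferred randomness left to make the map $q\mapsto([q\cdot a]_{\cM_h})_{a\in A}$ behave like an independent uniform assignment conditioned on $\cM_h$, so the birthday bound you invoke is not justified. The dependence you flag as ``subtle'' is in fact the whole difficulty, and the one-line resolution you offer does not address it.

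David's actual argument proceeds quite differently: rather than a level-by-level martingale-style analysis, he works directly with pairs (or small collections) of states and bounds, by explicit counting over the random maps $\delta_a$ and the random set $F$, the probability that two distinct states remain $\equiv_h$-equivalent for $h$ of order $\log\log n$. The doubly-exponential growth you identify in your deterministic step is indeed the phenomenon driving the bound, but exploiting it rigorously requires controlling the collisions among the states $q\cdot w$ for short words $w$, not a stage-wise independence argument.
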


This result is remarkable in view of the lower bound which is given
in the following statement~\cite{Bassino&David&Nicaud:2011}.

\begin{theorem}
  If the underlying alphabet has at least two letters, then Moore's
  algorithm, applied on a minimal automaton with $n$ states, requires
  at least $\Omega(n\log\log n)$ operations.
\end{theorem}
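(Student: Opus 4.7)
The plan is to obtain the lower bound directly from a structural property of the Moore partitions $\cM_h$: they can only refine slowly because each step has limited information-theoretic capacity. This will yield $\Omega(\log\log n)$ iterations for \emph{every} minimal $n$-state automaton over an alphabet of size at least two, which, combined with an $\Omega(n)$ lower bound per iteration, gives the announced bound.

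First I would prove the key growth inequality $|\cM_{h+1}|\leq|\cM_h|^{|A|+1}$. By Proposition~\ref{BBC:prop:Mooreiteratif}, the $\equiv_{h+1}$-class of a state $p$ is entirely determined by the $(|A|+1)$-tuple consisting of the $\equiv_h$-class of $p$ and of the $\equiv_h$-classes of the successors $p\cdot a$ for $a\in A$. Since $\cM_0=\{F,F^c\}$ has index at most $2$, induction on $h$ yields $|\cM_h|\leq 2^{(|A|+1)^h}$.

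Next I would specialize to a minimal $n$-state automaton. Minimality forces the Nerode partition to have exactly $n$ classes, so $|\cM_\ell|=n$ at the depth $\ell$. Substituting into the growth bound gives $n\leq 2^{(|A|+1)^\ell}$, whence $\ell\geq\log_{|A|+1}\log_2 n$. For $|A|\geq 2$ this is $\Omega(\log\log n)$, so the main \textsc{Repeat} loop in Figure~\ref{BBC:alg:Moore} is executed $\Omega(\log\log n)$ times on any minimal $n$-state automaton.

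Finally I would argue that each pass through the loop forces $\Omega(n)$ elementary operations: the assignment $\cP_a\gets\bigwedge_{P\in\cP}(P,a)|Q$ inspects the image of every state under~$a$, and even the last, stabilizing pass must examine every state in order to certify that none is further split. Multiplying the two bounds produces the desired $\Omega(n\log\log n)$ estimate. The main technical point is this last step: one has to fix a computation model in which consulting a transition costs one operation, and then argue that no implementation of Moore's algorithm can skip a state during an iteration. The cleanest way is an adversary argument -- if the algorithm were to output the Nerode partition without inspecting some transition out of a state $q$, one could alter that transition's target to an equivalent state lying in a different class, contradicting correctness -- after which the counting argument above closes the proof.
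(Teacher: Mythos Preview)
The paper does not supply its own proof of this theorem; it merely quotes the result from the cited reference of Bassino, David and Nicaud. Your argument is essentially the standard one and is correct: the doubly-exponential growth bound $|\cM_h|\le 2^{(|A|+1)^h}$ follows from Proposition~\ref{BBC:prop:Mooreiteratif} exactly as you indicate, and minimality of the automaton forces the depth to satisfy $\ell\ge\log_{|A|+1}\log_2 n$. Combined with the evident $\Omega(n)$ cost of each pass of the loop in Figure~\ref{BBC:alg:Moore}, this gives the stated bound.

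One small remark on your final paragraph: the adversary phrasing (``alter that transition's target to an \emph{equivalent} state lying in a \emph{different} class'') is self-contradictory as written, since Nerode-equivalent states lie in the same class by definition. For the specific algorithm of Figure~\ref{BBC:alg:Moore} no adversary argument is needed anyway: computing $\cP_a=\bigwedge_{P\in\cP}(P,a)|Q$ literally requires reading $q\cdot a$ for every state~$q$, so each iteration performs at least $n$ transition look-ups by inspection. If you do want a model-independent statement, the adversary should redirect the uninspected edge to a \emph{non}-equivalent state, so that the true Nerode partition changes while the algorithm's output cannot.
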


%
%

%
%

\section{Hopcroft's algorithm} \label{BBC:sec:hopcroft}

Hopcroft~\cite{Hopcroft:1971} has given an algorithm that computes the
minimal automaton of a given deterministic automaton.  The running
time of the algorithm is $O(k\, n\log n)$ where $k$ is the
cardinality of the alphabet and $n$ is the number of states of the
given automaton. The algorithm has been described and re-described
several times
\cite{Gries:1973,Aho&Hopcroft&Ullman:1974,Beauquier&Berstel&Chretienne:1992,Blum:1996,Knuutila:2001}.

\subsection{Outline}
The algorithm is outlined in the function \textsc{Hopcroft} given in
Figure~\ref{BBC:hopcroft}.  We denote by $\min(P,P')$ the set of smaller
size of the two sets $P$ and~$P'$, and any one of them if they have the
same size.

\begin{figure}
  \setlength{\commentspace}{7cm}
\hrule\smallskip\par
  \begin{algorithmic}[1]
    \FUNC{Hopcroft$(\A)$}
    \STATE\algcomment{0}{The initial partition}$\cP\gets \{F, F^c\}$
    \STATE\algcomment{0}{The waiting set}$\cW\gets \emptyset$
    \FORALL{$a \in A$} \STATE\algcomment{1}{Initialization of the waiting
      set}$\textsc{Add}((\min(F,F^c),a),\cW)$
    \ENDFOR
    \WHILE{$\cW\neq\emptyset$} \STATE\algcomment{1}{Take and remove some
      splitter}$(W,a) \gets
    \textsc{TakeSome}(\cW)$ \label{BBC:h:takesome}
    \FOR{each $P\in \cP$ which is split by
      $(W,a)$}
    \STATE\algcomment{2}{Compute the split}$P',P'' \gets
    (W,a)|P$ \label{BBC:h:split}
    \STATE\algcomment{2}{Refine the
      partition}\textsc{Replace} $P$ by $P'$ and $P''$ in $\cP$
    \FORALL{\algcomment{4.82}{Update the waiting set}$b\in A$}
    \IF{$(P,b)\in \cW$} \STATE \textsc{Replace} $(P,b)$ by $(P',b)$
    and $(P'',b)$ in $\cW$ \ELSE \STATE
    $\textsc{Add}((\min(P',P''),b),\cW)$
    \ENDIF
    \ENDFOR
    \ENDFOR
    \ENDWHILE
  \end{algorithmic}
\hrule\smallskip\par
\caption{Hopcroft's minimization algorithm.}\label{BBC:hopcroft}
\end{figure}

Given a deterministic automaton~$\mathcal{A}$, Hopcroft's algorithm
computes the coarsest congruence which saturates the set~$F$ of final
states.  It starts from the partition $\{F, F^c\}$ which obviously
saturates~$F$ and refines it until it gets a congruence.  These refinements
of the partition are always obtained by splitting some class into two
classes.


The algorithm proceeds as follows.  It maintains a current partition
$\mathcal{P} = \{P_1,\ldots,P_n\}$ and a current set~$\mathcal{W}$ of
\emph{splitters}\index{Hopcroft's algorithm!splitter}\index{splitter
  in Hopcroft's algorithm}, that is of
pairs $(W,a)$ that remain to be processed, where $W$ is a class
of~$\mathcal{P}$ and $a$ is a letter.  The set~$\mathcal{W}$ is called
the \emph{waiting} set\index{Hopcroft's algorithm!waiting
  set}\index{waiting set in Hopcroft's algorithm}.  The
algorithm stops when the waiting set~$\mathcal{W}$ becomes empty.
When it stops, the partition~$\mathcal{P}$ is the coarsest congruence
that saturates~$F$.  The starting partition is the partition $\{F,
F^c\}$ and the starting set~$\mathcal{W}$ contains all pairs $(\min(F,
F^c),a)$ for $a \in A$.

The main loop of the algorithm removes one splitter $(W,a)$ from the
waiting set~$\mathcal{W}$ and performs the following actions.  Each
class $P$ of the current partition (including the class~$W$) is
checked as to whether it is split by the pair $(W,a)$.  If $(W,a)$
does not split $P$, then nothing is done.  On the other hand, if
$(W,a)$ splits $P$ into say $P'$ and~$P''$, the class $P$ is replaced
in the partition~$\mathcal{P}$ by $P'$ and~$P''$.  Next, for each
letter $b$, if the pair $(P,b)$ is in~$\mathcal{W}$, it is replaced
in~$\mathcal{W}$ by the two pairs $(P',b)$ and $(P'',b)$, otherwise
only the pair $(\min(P',P''),b)$ is added to~$\mathcal{W}$.

It should be noted that the algorithm is not really deterministic because
it has not been specified which pair $(W,a)$ is taken from~$\mathcal{W}$ to
be processed at each iteration of the main loop. This means that for a
given automaton, there are many executions of the algorithm.  It turns out
that all of them produce the right partition of the states.  However,
different executions may give rise to different sequences of splitting and
also to different running time.  Hopcroft has proved that the running time
of any execution is bounded by $O(|A| n \log n)$.


\subsection{Behavior}

The pair $(\cP,\cW)$ composed of the current partition and the current
waiting set in some execution of Hopcroft's algorithm is called a
\emph{configuration}\index{configuration in Hopcroft's
  algorithm}\index{Hopcroft's algorithm!configuration}. The following
proposition describes the evolution of the current partition in
Hopcroft's algorithm. Formula~\ref{BBC:eq:Hopcroftsplitter} is the key
inequality for the proofs of correctness and termination. We will use
it in the special case where the set $R$ is a class of the current partition.

\begin{proposition}\label{BBC:prop:Hopcroftsplitter}
  Let $(\cP,\cW)$ be a configuration 
  in some execution of Hopcroft's algorithm on an automaton $\A$ on $A$.  For
  any $P\in\cP$, any subset $R$  of a class of  $\cP$, and $a\in A$, one has
  \begin{equation}\label{BBC:eq:Hopcroftsplitter}
    (P,a)|R\ge\bigwedge_{(W,a)\in\cW}(W,a)|R\,,
  \end{equation}
  that is, the partition $(P,a)|R$ is coarser than the partition
  $\bigwedge_{(W,a)\in \cW}(W,a)|R$.
\end{proposition}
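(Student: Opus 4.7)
The approach is induction on the number of iterations of the main while-loop that have been performed to reach the configuration $(\cP,\cW)$. Write $\mu(\cW,a,R) = \bigwedge_{(W,a)\in\cW}(W,a)|R$ for short. At the initial configuration, $\cP=\{F,F^c\}$ and, for each $a\in A$, the only splitter in $\cW$ with letter $a$ is $(\min(F,F^c),a)$; by~\eqref{BBC:eq:0}, $(\min(F,F^c),a)|R = (F,a)|R = (F^c,a)|R$, so the desired inequality is in fact an equality for the only two candidates $P=F$ and $P=F^c$.

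For the inductive step, suppose the invariant holds at $(\cP,\cW)$ and that the next iteration processes $(W,a_0)\in\cW$, producing $(\cP',\cW')$. Fix $P^*\in\cP'$, a subset $R^*$ of a class of $\cP'$, and $b\in A$; the goal is $(P^*,b)|R^* \ge \mu(\cW',b,R^*)$. A key observation is that $R^*$ is included in one of the two sets into which $(W,a_0)$ splits the enclosing class of $\cP$, so $R^*\cdot a_0$ is either entirely contained in $W$ or entirely disjoint from $W$, and $(W,a_0)|R^* = \{R^*\}$ is the trivial partition of $R^*$. Hence removing $(W,a_0)$ from $\cW$ does not alter the meet on $R^*$. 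The further updates made to $\cW$ (for each class $P=P_1\cup P_2$ split during the iteration, the replacement of $(P,b)$ by $(P_1,b)$ and $(P_2,b)$, or the addition of $(\min(P_1,P_2),b)$) can only refine the meet, thanks to Lemma~\ref{BBC:le:magique}; so $\mu(\cW',b,R^*)\le\mu(\cW,b,R^*)$.

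To finish, I split on cases. If $P^*\in\cP$ (not split in this iteration), the induction hypothesis gives $(P^*,b)|R^*\ge\mu(\cW,b,R^*)\ge\mu(\cW',b,R^*)$. Otherwise $P^* = P_i$ is one half of a split class $P=P_1\cup P_2$; either $(P^*,b)\in\cW'$ and the inequality is immediate, or only $(P_{3-i},b)$ has been inserted for that letter. In the latter case Lemma~\ref{BBC:le:magique} yields $(P^*,b)|R^*\ge(P,b)|R^*\wedge(P_{3-i},b)|R^*$, and both partitions on the right are at least $\mu(\cW',b,R^*)$: the second because $(P_{3-i},b)\in\cW'$, the first by the induction hypothesis applied to $P\in\cP$ together with the monotonicity just established. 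The main obstacle is the exceptional subcase $P=W$, $b=a_0$, where the splitter leaves the waiting set and $W$ is split with only its smaller half reinserted; it is exactly here that the restriction of $R^*$ to a class of the refined partition $\cP'$ is essential, making $(W,a_0)|R^*$ trivial and preventing a loss of information at the removal.
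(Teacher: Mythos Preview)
Your proof is correct and follows essentially the same inductive route as the paper: both arguments establish the base case via~\eqref{BBC:eq:0}, derive the monotonicity $\mu(\cW',b,R^*)\le\mu(\cW,b,R^*)$ from the fact that $(W,a_0)|R^*$ is trivial together with Lemma~\ref{BBC:le:magique}, and then split on whether $P^*$ belongs to the old partition or is one half of a split class, handling the latter via~\eqref{BBC:eq:magique2}. Your closing remark isolating the subcase $P=W$, $b=a_0$ and explaining why the hypothesis that $R^*$ lies in a class of the \emph{new} partition $\cP'$ is needed there is a nice piece of commentary that the paper leaves implicit.
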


\begin{proof}
  The proof is by induction on the steps of an execution. The initial
  configuration $(\cP,\cW)$ is composed of the initial partition is
  $\cP=\{F,F^c\}$ and the initial waiting set $\cW$ is either
  $\cW=\{(F,a)\mid a\in A\}$ or $\cW=\{(F^c,a)\mid a\in A\}$. Since
  the partitions $(F,a)|R$ and $(F^c,a)|R$ are equal for any set $R$
  the proposition is true.

  Assume now that $(\cP,\cW)$ is not the initial configuration. Let
  $(\widehat\cP,\widehat\cW)$ be a configuration that precedes
  immediately $(\cP,\cW)$. Thus $(\cP,\cW)$ is obtained from
  $(\widehat\cP,\widehat\cW)$ in one step of Hopcroft's algorithm, by
  choosing one splitter $S$ in
  $\widehat\cW$, and by performing the required operations on
  $\widehat\cP$ and $\widehat\cW$.

  First we observe that, by the algorithm, and by
  Lemma~\ref{BBC:le:magique}, one has for any set of
  states $R$,
  \begin{equation}
    \label{BBC:eq:splitter}
    \bigwedge_{(\widehat W,a)\in\widehat\cW\setminus\{S\}}(\widehat
    W,a)| R\ge \bigwedge_{(W,a)\in\cW}(W,a)| R\,.
  \end{equation}
  Indeed, the set $\cW$ contains all $\widehat\cW\setminus\{S\}$ with
  the exception of those pairs $(P,a)$ for which $P$ is split into two
  parts, and in this case, the relation follows from
  \eqref{BBC:eq:magique1}.

Next, consider a subset $R$  of a set of  $\cP$.
Since $R$ was not split by $S$, that is since  $S|R=\{R\}$,
we have
\begin{equation}
  \label{BBC:eq:nosplitter}
  \bigwedge_{(\widehat W,a)\in\widehat\cW}(\widehat
    W,a)| R= \bigwedge_{(\widehat W,a)\in\widehat\cW\setminus\{S\}}(\widehat
    W,a)| R\,.
\end{equation}

Moreover, by the induction hypothesis, and since $R$ is also a subset
of a set of $\widehat \cP$, we have for any $\widehat
P\in\widehat\cP$,
\begin{equation}
  \label{BBC:eq:induction}
  (\widehat P,a)| R \ge \bigwedge_{(\widehat W,a)\in\widehat\cW}(\widehat
    W,a)| R\,.
\end{equation}
Consequently, for any subset $R$ of a set of $\cP$, any $\widehat
P\in\widehat\cP$, in view of \ref{BBC:eq:splitter},
\ref{BBC:eq:nosplitter} and \ref{BBC:eq:induction}, we have
\begin{equation}
  \label{BBC:eq:nonosplit}
 (\widehat P,a)| R \ge\bigwedge_{(W,a)\in\cW}(W,a)| R\,.
\end{equation}

Let now $R\in\cP,$ $a\in A$, and let again $R$ be a subset of a set of
$\cP.$ We consider the partition $(P,a)|R$.  We distinguish two cases.

  Case 1. Assume $P\in\widehat\cP$. The proof follows directly from
  \eqref{BBC:eq:nonosplit}. 
  
  Case 2. Assume $P\notin\widehat\cP$.  Then there exists $\widehat
  P\in\widehat\cP$ such that $S|\widehat P=\{P,P'\}$.

 If $(\widehat P,a)\in\widehat\cW\setminus\{S\}$, then
  both $(P,a)$ and $(P',a)$ are in $\cW$. Consequently
  \begin{displaymath}
    (P,a)| R\ge\bigwedge_{(W,a)\in \cW}(W,a)| R\,.
  \end{displaymath}

  If, on the contrary, $(\widehat P,a)\notin\widehat\cW\setminus\{S\}$,
  then by the algorithm, one of $(P,a)$ or $(P',a)$ is in $\cW$.

  Case 2a. Assume that $(P,a)\in\cW$. Then obviously
  \begin{displaymath}
    (P,a)| R\ge\bigwedge_{(W,a)\in \cW}(W,a)| R\,.
  \end{displaymath}
  \indent Case 2b. Assume $(P',a)\in\cW$.
By  Lemma~\ref{BBC:le:magique},
we have 
\begin{displaymath}
  (P,a)| R \ge (P',a)| R \wedge (\widehat P,a)|R
\end{displaymath}
as obviously we have 
\begin{displaymath}
    (P',a)| R\ge\bigwedge_{(W,a)\in \cW}(W,a)| R\,.
  \end{displaymath}
and by use of \eqref{BBC:eq:nonosplit},
we obtain
\begin{displaymath}
    (P,a)| R\ge\bigwedge_{(W,a)\in \cW}(W,a)| R\,.
  \end{displaymath}
 This completes the proof.
\end{proof}

\begin{corollary}
  The current partition at the end of an execution of Hopcroft's
  algorithm on an automaton $\A$ is the Nerode partition of $\A$.
\end{corollary}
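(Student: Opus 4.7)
The plan is to establish two inclusions between the final partition $\cP$ produced by the algorithm and the Nerode partition $\mathcal{N}$: (i) $\mathcal{N}\le\cP$ at every configuration reached during the execution, and (ii) $\cP\le\mathcal{N}$ once $\cW=\emptyset$. Since both are partitions of $Q$, these two inclusions together yield $\cP=\mathcal{N}$. Termination itself is a cheap potential argument: each iteration of the main loop either strictly refines $\cP$ (so the index of $\cP$, bounded by $|Q|$, strictly increases) or else strictly decreases $|\cW|$, so the algorithm halts after finitely many steps with $\cW=\emptyset$.

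For (i), I would proceed by induction on the number of refinement steps. The initial partition $\{F,F^c\}$ is refined by $\mathcal{N}$ because the Nerode equivalence saturates $F$. For the inductive step, suppose $\mathcal{N}\le\widehat\cP$, and let the algorithm split $\widehat P\in\widehat\cP$ via the splitter $(W,a)$ into $P'=\widehat P\cap a^{-1}W$ and $P''=\widehat P\setminus a^{-1}W$. An invariant of the algorithm (immediate from the initialization and from every \textsc{Replace}/\textsc{Add} step) guarantees that $W$ is itself a class of $\widehat\cP$, so by the induction hypothesis $W$ is a union of Nerode classes. Because $\mathcal{N}$ is a right congruence, $a^{-1}W$ is then also a union of Nerode classes, whence every Nerode class contained in $\widehat P$ lies entirely in $P'$ or entirely in $P''$.

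For (ii), I would appeal directly to Proposition \ref{BBC:prop:Hopcroftsplitter}. At termination $\cW=\emptyset$, so for any class $R\in\cP$, any $P\in\cP$, and any $a\in A$, the inequality reads $(P,a)|R\ge\{R\}$ (the empty conjunction being the trivial partition $\{R\}$), which forces $(P,a)|R=\{R\}$. Hence no splitter $(P,a)$ with $P\in\cP$ and $a\in A$ splits a class of $\cP$. Since $\cP$ also still saturates $F$ (a trivial invariant inherited from the initial partition $\{F,F^c\}$), Proposition \ref{BBC:prop:Nerode} identifies $\mathcal{N}$ as the coarsest partition with exactly this property, so $\cP\le\mathcal{N}$.

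The only real substance has already been absorbed into Proposition \ref{BBC:prop:Hopcroftsplitter}, which is precisely what controls how the refinements of $\cP$ and the updates of $\cW$ interact through Lemma~\ref{BBC:le:magique}. Given that proposition, the corollary reduces to the empty-waiting-set specialization in (ii) together with the congruence-preservation induction in (i); neither step involves any delicate book-keeping, so I do not anticipate a significant obstacle beyond being careful with the invariant that the first coordinate of each splitter in $\cW$ is genuinely a class of the current partition.
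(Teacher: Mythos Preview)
Your proof is correct and follows essentially the same approach as the paper: the paper's own argument is exactly your step~(ii), specializing Proposition~\ref{BBC:prop:Hopcroftsplitter} to $\cW=\emptyset$ and invoking Proposition~\ref{BBC:prop:Nerode}. The paper leaves your step~(i) (that the Nerode partition refines the current partition throughout) and the termination remark implicit, so your write-up is simply more explicit on those points.
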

\begin{proof}
  Let $\cP$ be the partition obtained at the end of an execution of
  Hopcroft's algorithm on an automaton $\A$.  By
  Proposition~\ref{BBC:prop:Nerode}, it suffices to check that no splitter
  splits a class of $\cP$. Since the waiting set $\cW$ is empty, the
  right-hand side of \eqref{BBC:eq:Hopcroftsplitter} evaluates to $\{R\}$
  for each triple $(P,a,R)$. This means that $(P,a)$ indeed does not
  split $R$.
\end{proof}

\subsection{Complexity}

\begin{proposition}\label{BBC:prop:time}
  Hopcroft's algorithm can be implemented to have worst-case time
  complexity $O(kn\log n)$ for an automaton with $n$ states over a
  $k$-letter alphabet.
\end{proposition}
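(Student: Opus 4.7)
The plan is to (i) choose data structures that let each processing of a splitter $(W,a)$ run in time proportional to $|a^{-1}W|$, (ii) rearrange the total cost as a sum over state--letter pairs, and (iii) bound, for each such pair, the number of processed splitters containing the state by $O(\log n)$ via the ``smaller half'' rule.

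For the data structures I would represent the current partition by doubly-linked lists of states with back-pointers from each state to its class, so that replacing $P$ by $P'$ and $P''$ in $\cP$ takes time $O(\min(|P'|,|P''|))$, and precompute in time $O(kn)$ the reverse transition tables $a^{-1}\{q\}=\{p\mid p\cdot a=q\}$. Processing a splitter $(W,a)$ then consists of enumerating $a^{-1}W$ through these tables, marking those states, bucketing them by current class to identify the properly split classes, performing the splits, and updating $\cW$ for each letter $b$ according to the algorithm's rule. Standard bookkeeping makes this cost $O(|a^{-1}W|+k\,s_{W,a})$, where $s_{W,a}$ is the number of splits triggered. Since the total number of splits in the whole execution is at most $n-1$, the contribution $\sum k\,s_{W,a}$ sums to $O(kn)$, and the remainder of the total cost equals
\begin{displaymath}
\sum_{(W,a)\text{ processed}} |a^{-1}W| \;=\; \sum_{a\in A}\sum_{q\in Q} |a^{-1}\{q\}|\cdot c(q,a),
\end{displaymath}
where $c(q,a)$ counts the processed splitters $(W,a)$ with $q\in W$. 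Using $\sum_q|a^{-1}\{q\}|=n$ for each $a$ (the $a$-preimages partition $Q$), the proposition will follow from $c(q,a)\leq 1+\log_2 n$.

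To bound $c(q,a)$ I would fix $q$ and $a$ and show that any two consecutively processed splitters $(W_i,a)$ and $(W_{i+1},a)$ both containing $q$ satisfy $|W_{i+1}|\leq |W_i|/2$. This follows from the invariant, maintained by induction on the algorithm's steps between the two processings: \emph{at any moment, every splitter $(W,a)\in\cW$ with $q\in W$ has $|W|\leq |W_i|/2$}. The invariant holds right after $(W_i,a)$ is processed --- if $W_i$ was split into $W_i^1,W_i^2$ by $(W_i,a)$ itself, the algorithm inserts only $(\min(W_i^1,W_i^2),a)$, of size at most $|W_i|/2$; otherwise no splitter in $\cW$ for letter $a$ contains $q$. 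It is preserved by every later step: when a class $P\ni q$ (always contained in $W_i$ by monotone refinement) is split into $P',P''$, the algorithm for letter $a$ either adds $(\min(P',P''),a)$ when $(P,a)\notin\cW$, of size $\leq |P|/2\leq |W_i|/2$, or replaces an already-present $(P,a)\in\cW$ by its two halves, each of size $\leq |P|\leq |W_i|/2$ by the induction hypothesis applied to $(P,a)$. Splits of classes not containing $q$ do not affect the relevant splitter. Iterating the halving gives $|W_i|\leq n/2^{\,i-1}$, hence $c(q,a)\leq 1+\log_2 n$.

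The main technical obstacle is this invariant. Neither update rule suffices alone: without the ``$\min$'' rule a newly added splitter could be as large as the parent class, and without controlling ``both halves'' insertions via the induction hypothesis a large splitter $(P,a)$ already in $\cW$ could produce large new splitters upon further splitting. It is the combination of the two rules, together with the monotone refinement of classes, that yields the halving and thus the final $O(kn\log n)$ bound.
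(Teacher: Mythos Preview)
Your proposal is correct and follows essentially the same approach as the paper. The paper introduces the notion of a \emph{$q$-splitter} (a splitter $(W,a)$ with $q\in W$), argues that when a $q$-splitter is added to $\cW$ its size is at most half the size at the previous removal, and then performs the same double counting of $\sum|a^{-1}W|$ via the states $p$ with $p\cdot a=q$; your invariant ``every $(W,a)\in\cW$ with $q\in W$ has $|W|\le|W_i|/2$ between the $i$-th and $(i{+}1)$-st processing'' is simply a more explicit formulation of the paper's terser halving argument, and your separate $O(kn)$ accounting for the waiting-set updates makes explicit what the paper leaves implicit in its data-structure discussion.
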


To achieve the bound claimed, a partition $\cP$ of a set $Q$ should be
implemented in a way to allow the following operations:
\begin{itemize}
\item accessing the class to which  a state belongs in constant time;
\item enumeration of the elements of a class in time proportional to
  its size;
\item adding and removing of an element in a class in constant time.
\end{itemize}

The computation of all splittings $P',P''$ of classes $P$ by a given
splitter $(W,a)$ is done in time $O(\Card(a^{-1}W))$ as follows.
\begin{enumerate}
\item One enumerates the states $q$ in $a^{-1}W$. For each state $q$,
  the class $P$ of $q$ is marked as a candidate for splitting, the
  state $q$ is added to a list of states to be removed from $P$, and a
  counter for the number of states in the list is
  incremented.

\item Each class that is marked is a candidate for splitting. It is
  split if the number of states to be removed differs from the size of
  the class. If this holds, the states in the list of $P$ are removed
  to build a new class. The other states remain in $P$.
\end{enumerate}

The waiting set $\cW$ is implemented such that membership can be
tested in constant time, and splitters can be added and removed in
constant time. This allows the replacement of a splitter
$(P,b)$ by the two splitters $(P',b)$ and $(P'',b)$ in constant time,
since in fact $P'$ is just the modified class $P$, and it suffices to
add the splitter $(P'',b)$.

Several implementations of partitions that satisfy the time
requirements exist. Hopcroft \cite{Hopcroft:1971} describes such a data
structure, reported in \cite{Berstel&Perrin:2005}. Knuutila
\cite{Knuutila:2001} gives a different implementation.

\begin{proof}[Proof \textup{of Proposition~\ref{BBC:prop:time}}]
  For a given state $q$, a splitter
  $(W,a)$ such that $q\in W$ is called a \textit{$q$-splitter}.

  Consider some $q$-splitter. When it is removed from the waiting set $\cW$,
  it may be smaller than when it was added, because it may have been
  split during its stay in the waiting set. On the contrary, when a
  $q$-splitter is added to the waiting set, then its size is at most
  one half of the size it had when it was previously removed.  Thus,
  for a fixed state $q$, the number of $q$-splitters $(W,a)$ which are
  removed from $\cW$ is at most $k\log n$, since at each removal, the
  number of states in $W$ is at most one half of the previous addition.

  The total number of elements of the sets $a^{-1}W$, where $(W,a)$ is
  in $\cW$, is $O(kn\log n)$. Indeed, for a fixed state $q\in Q$, a
  state $p$ such that $p\cdot a=q$ is exactly in those sets $a^{-1}W$,
  where $(W,a)$ is a $q$-splitter in $\cW$. There are at most $O(\log n)$
  of such sets for each letter $a$, so at most $(k\log n)$ sets for
  each fixed $q$. Since there are $n$ states, the claim follows. This
  completes the proof since the running time is bounded by the size of the
  sets $a^{-1}W$, where $(W,a)$ is in $\cW$.
\end{proof}

\subsection{Miscellaneous remarks}

There are some degrees of freedom in Hopcroft's algorithm. In
particular, the way the waiting set is represented may influence the
efficiency of the algorithm. This issue has been considered in
\cite{Knuutila:2001}. In \cite{Baclet&Pagetti:2006} some practical
experiments are reported.  In \cite{Paun&Paun&Rodriguez&Paton:2009} it is
shown that the worst-case reported in \cite{Berstel&Carton:2004} in the
case of de Bruijn words remains of this complexity when the waiting
set is implemented as a queue (LIFO), whereas this complexity is never
reached with an implementation as a stack (FIFO). See this paper for
other discussions, in particular in relation with cover automata.

Hopcroft's algorithm, as reported here, requires the automaton to be
complete. This may be a serious drawback in the case where the
automaton has only a few transitions. For instance, if a dictionary is
represented by a trie, then the average number of edges per state is
about $1.5$ for the French dictionary (personal communication of
Dominique Revuz), see Table~\ref{BBC:table:size-performances} below.
Recently, two generalizations of Hopcroft's algorithm to incomplete
automata were presented, by \cite{Beal&Crochemore:2008} and
\cite{Valmari&Lehtinen:2008}, with running time $O(m\log n)$, where
$n$ is the number of states and $m$ is the number of transitions.
Since $m\le kn$ where $k$ is the size of the alphabet, the algorithm
achieves the same time complexity.

\subsection{Moore versus Hopcroft}

We present an example which illustrates the fact that Hopcroft's
algorithm is not just a refinement of Moore's algorithm. This is
proved by checking that one of the partitions computed by Moore's
algorithm in the example does not appear as a partition in any of the
executions of Hopcroft's algorithm on this automaton.

The automaton we consider is over the alphabet $A=\{a,b\}$. Its set of
states is $Q=\{0,1,2,\ldots, 9\}$, the set of final states is
$F=\{7,8,9\}$. The next-state function and the graph are given in
Figure~\ref{BBC:tab:a}. 
\begin{figure}
  \centering
\begin{picture}(60,40)(0,-20)
  \small$\begin{array}{r|cccccccccc}
   &0&1&2&3&4&5&6&7&8&9\\\hline
  a&2&1&3&3&5&7&8&1&7&6\\
  b&9&1&5&7&8&7&8&1&7&4
  \end{array}$
\end{picture}\qquad
\scalebox{0.85}{\unitlength=0.8mm\gasset{Nw=6,Nh=6}
   \begin{picture}(70,75)(0,-20)
      \node[Nmarks=i,iangle=90](0)(0,20){$0$}
      \node(1)(60,0){$1$}
      \node(2)(20,30){$2$}
      \node(3)(40,40){$3$}
      \node(4)(20,10){$4$}
      \node(5)(40,20){$5$}
      \node(6)(20,-10){$6$}
      \node[Nmarks=f,fangle=0](7)(60,20){$7$}
      \node[Nmarks=f,fangle=-90](8)(40,0){$8$}
      \node[Nmarks=f,fangle=-90](9)(0,0){$9$}
      \drawedge(0,2){$a$}\drawedge(0,9){$b$}
      \drawloop[loopangle=-90](1){$a,b$}
      \drawedge(2,3){$a$} \drawedge[ELside=r](2,5){$b$}
      \drawloop[loopangle=90](3){$a$}\drawedge(3,7){$b$}
      \drawedge(4,5){$a$} \drawedge[ELside=r](4,8){$b$}
      \drawedge(5,7){$a,b$}
      \drawedge[ELside=r](6,8){$a,b$}
      \drawedge(7,1){$a,b$}
      \drawedge[ELside=r](8,7){$a,b$}
      \drawedge[ELside=r](9,6){$a$}\drawedge(9,4){$b$}
   \end{picture}}      
  \caption{Next-state function of the example automaton.}
  \label{BBC:tab:a}
\end{figure}
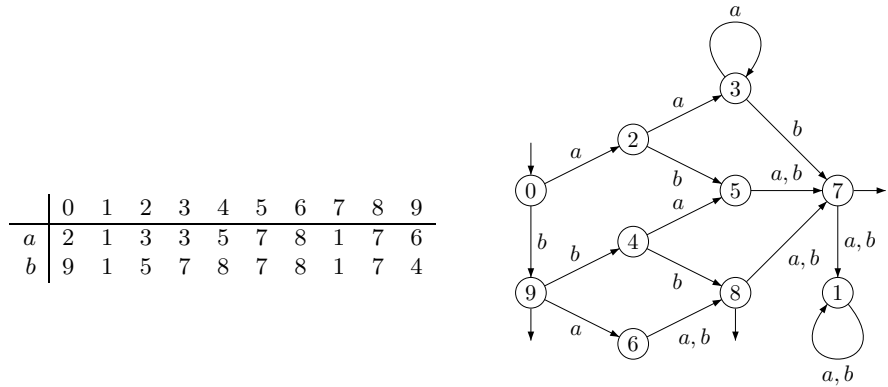

The Moore partitions are easily computed. The partition $\cM_1$ of
order~$1$ is composed of the five classes:
\begin{displaymath}
  \{0,3,4\},\{1,2\},\{5,6\},\{7,9\},\{8\}\,.
\end{displaymath}
The Moore partition of order~$2$ is the identity.

The initial partition for Hopcroft's algorithm is $\{F,F^c\}$, where
$F=789$ (we will represent a set of states by the sequence of its
elements). The initial waiting set is composed of $(789,a)$ and
$(789,b)$.  There are two cases, according to the choice of the first
splitter.

Case 1. The first splitter is $(789,a)$. Since $a^{-1}789=568$, each
of the classes $F$ and $F^c$ is split. The new partition is
$01234|56|79|8$. The new waiting set is
  \begin{displaymath}
    (79,b), (8,b),(8,a),(56,a),(56,b)\,.
  \end{displaymath}
The first three columns in 
Table~\ref{BBC:tab:m} contain the sets $c^{-1}P$, for $(P,c)$ in this
waiting set. By inspection, one sees that each entry in these columns
cuts off at least one singleton class which is not in the Moore
equivalence $\cM_1$. This implies that $\cM_1$ cannot be obtained by
Hopcroft's algorithm in this case.

Case 2. The first splitter is $(789,b)$. Since $b^{-1}789=034568$, the
new partition is $12|03456|79|8$. The new waiting set is
\begin{displaymath}
  (79,a),(8,a),(8,b),(12,a),(12,b)\,.
\end{displaymath}
Again, each entry in the last three columns of Table~\ref{BBC:tab:m}
cuts off at least one singleton class which is not in the Moore
equivalence $\cM_1$. This implies that, also in this case, $\cM_1$
cannot be obtained by Hopcroft's algorithm.

\begin{table}
  \centering
  $\begin{array}{r|cccc}
    P&56&8&79&12\\\hline
    \text{\Large\vphantom(} a^{-1}P%
    &49&6&58&017\\
    \text{\Large\vphantom(} b^{-1}P%
    &2&46&0357&17
  \end{array}$
  \caption{The sets $c^{-1}P$, with $(P,c)$ in a waiting set.}
  \label{BBC:tab:m}
\end{table}

\bigskip

Despite the difference illustrated by
this example, there are similarities between Moo\-re's and
Hopcroft's algorithms that have been exploited by Julien David in his
thesis~\cite{David:2010b} to give an upper bound on the average
running time of Hopcroft's algorithm for a particular
strategy. 

In this strategy,
there are two waiting sets, the current set $\cW$ and a \emph{future
  waiting set} $\cF$. Initially, $\cF$ is empty. Hopcroft's algorithm
works as usual, except for line $14$: Here, the splitter
$(\min(P',P''),b)$ is added to $\cF$ and not to $\cW$. When $\cW$ is
empty, then the contents of $\cF$ and $\cW$ are swapped. The algorithm
stops when both sets $\cW$ and $\cF$ are empty.

\begin{proposition}[David \cite{David:2010b}]
  There is a strategy for Hopcroft's algorithm such that its average
  complexity, for the uniform probability over all complete automata
  with $n$ states, is $O(n\log\log n)$.
\end{proposition}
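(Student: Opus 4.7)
The plan is to show that David's two-set variant of Hopcroft's algorithm runs in worst-case time $O(kn\ell)$, where $\ell$ is the depth of the input automaton from Section~\ref{BBC:sec:moore}. Since Moore's algorithm has worst-case running time $\Theta(kn\ell)$, Theorem~\ref{BBC:theoremDavid} is equivalent to saying $E[\ell] = O(\log\log n)$ over the uniform distribution on complete automata with $n$ states (with $k$ fixed), so the bound $O(kn\ell)$ at once yields the claimed average of $O(n\log\log n)$.

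I would organize the execution into \emph{rounds}, where round $h$ is the phase between the $h$-th and $(h{+}1)$-th swap of $\cW$ with $\cF$, and prove by induction on $h$ that at the end of round $h$ the current partition $\cP$ refines the Moore partition $\cM_h$. The base case is immediate since $\cP = \{F,F^c\} = \cM_0$. For the inductive step, one shows that at the $h$-th swap the union $\cW \cup \cF$ contains enough splitter material to realize, via Proposition~\ref{BBC:prop:Moore}, the refinement from $\cM_{h-1}$ to $\cM_h$; the identity \eqref{BBC:eq:0} together with Lemma~\ref{BBC:le:magique} are used to compensate for the fact that $\cF$ only stores the \emph{smaller} half of each split. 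Since $\cP$ is always coarser than the Nerode partition $\cM_\ell$, the algorithm must terminate after at most $\ell+1$ rounds.

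Next, I would bound the work of a single round by $O(kn)$. The key observation is that within one round, for each fixed letter $a$, the splitters $(W,a)$ actually removed from $\cW$ for processing form a family of pairwise disjoint subsets of $Q$. The IF branch only replaces $(P,b)$ in $\cW$ by its two halves, which preserves the disjointness of $\{W : (W,a)\in\cW\}$ for each $a$; the ELSE branch, which is the one that could otherwise break this invariant, has been redirected to $\cF$. Hence the processed splitters for a given letter $a$ are the leaves of a laminar refinement of a disjoint initial family of total size at most $n$, giving $\sum_{(W,a)\text{ processed}} |a^{-1}W| \le n$ per letter and $O(kn)$ per round. Multiplying by the $O(\ell)$ bound on the number of rounds yields the total $O(kn\ell)$.

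The main obstacle will be the refinement claim in the first step. One must argue, using Lemma~\ref{BBC:le:magique} in a manner closely reminiscent of the proof of Proposition~\ref{BBC:prop:Hopcroftsplitter}, that at each swap the content of $\cW \cup \cF$ is rich enough to advance the partition by one Moore level, \emph{despite only the smaller halves being deposited in $\cF$}. Once this correspondence between the rounds of David's variant and the iterations of Moore's algorithm is firmly established, the per-round work bound and the appeal to Theorem~\ref{BBC:theoremDavid} are routine.
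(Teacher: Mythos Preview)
Your plan coincides with the paper's: show that after the $h$-th round the current partition refines $\cM_h$, bound the number of rounds by the depth of the automaton, and then invoke Theorem~\ref{BBC:theoremDavid}. The paper's own argument is in fact even more compressed than yours --- it simply attributes the end-of-round refinement claim to David's thesis and writes ``Thus Theorem~\ref{BBC:theoremDavid} applies'', without spelling out the per-round $O(kn)$ cost that you sketch.
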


Julien David shows that at the end of each cycle, that is when $\cW$
becomes empty, the current partition $\cP$ of the set of states is in
fact a refinement of the corresponding level in Moore's algorithm. This
shows that the number of cycles in Hopcroft's algorithm, for this
strategy, is bounded by the depth of the automaton. Thus
Theorem~\ref{BBC:theoremDavid} applies.

%
%

\section{Slow automata} \label{BBC:sec:slow}

We are concerned in this section with automata that behave badly for
Hopcroft's and Moore's minimization algorithms. In other terms, we
look for automata for which Moo\-re's algorithm requires the maximal
number of steps, and similarly for Hopcroft's algorithm.

\subsection{Definition and equivalence}

Recall that an automaton with $n$ states is called \emph{slow for
  Moore}\index{automaton!slow for Moore}\index{slow automaton!for
  Moore} if the number $\ell$ of steps in Moore's algorithm is
$n-2$. A slow automaton is minimal.
It is equivalent to say that each Moore equivalence $\equiv_h$ has
exactly $h+2$ equivalence classes for $h\le n-2$. This is due to the
fact that, at each step, just one class of $\equiv_h$ is split, and
that this class is split into exactly two classes of the equivalence
$\equiv_{h+1}$.

Proposition~\ref{BBC:prop:Moore} takes the following special form for slow
automata. 
\begin{proposition} \label{BBC:cor:Mooreslow}
  Let $\A$ be an automaton with $n$ states which is slow for Moore.
  For all $n-2>h\ge0$, there is exactly one class $R$ in $\cM_h$ which
  is split, and moreover, if $(P,a)$ and $(P',a')$ split $R$, with
  $P,P'\in\cM_h$, then $(P,a)|R=(P',a')|R$.\qed
\end{proposition}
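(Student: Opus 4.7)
The plan is to use two ingredients: first, the characterization of ``slow for Moore'' that $|\mathcal{M}_h| = h+2$ for $0 \le h \le n-2$, and second, the refinement formula of Proposition~\ref{BBC:prop:Moore}, which tells us
\[
  \mathcal{M}_{h+1} = \bigvee_{R \in \mathcal{M}_h}\Bigl(\bigwedge_{a\in A}\bigwedge_{P\in\mathcal{M}_h}(P,a)|R\Bigr).
\]
Combining the two, the index jumps from $h+2$ to $h+3$, so exactly one summand $R \in \mathcal{M}_h$ can contribute a nontrivial refinement, and this $R$ must be split into exactly two pieces in $\mathcal{M}_{h+1}$. This immediately gives the first assertion: there is exactly one $R \in \mathcal{M}_h$ that is split, and it is split in two.

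For the second assertion, fix this unique $R$ and let $\mathcal{M}_{h+1}|R = \{R_1, R_2\}$, a 2-class partition. Suppose $(P,a)$ and $(P',a')$ both split $R$ with $P, P' \in \mathcal{M}_h$. Then $(P,a)|R$ and $(P',a')|R$ are each 2-class partitions of $R$, and both appear among the factors of the meet
\[
  \{R_1, R_2\} = \bigwedge_{b\in A}\bigwedge_{Q\in\mathcal{M}_h}(Q,b)|R.
\]
Hence $(P,a)|R \wedge (P',a')|R$ refines $\{R_1, R_2\}$ on the one hand and is refined by $\{R_1, R_2\}$ on the other (being a meet of factors that contains this meet), so $(P,a)|R \wedge (P',a')|R = \{R_1, R_2\}$.

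The remaining step, which is the only real content, is the elementary observation that two 2-class partitions of $R$ whose meet has only two classes must be equal. Indeed, write $(P,a)|R = \{X_1, X_2\}$ and $(P',a')|R = \{Y_1, Y_2\}$. The meet consists of the nonempty intersections $X_i \cap Y_j$, and there are at most four of them. Since the meet has exactly two classes, exactly two of these four intersections are nonempty. But each $X_i$ is the disjoint union of $X_i \cap Y_1$ and $X_i \cap Y_2$, and $X_i$ is nonempty, so at least one of these two is nonempty; similarly for the $Y_j$. The only way to have exactly two nonempty intersections compatible with these constraints is that (up to relabeling) $X_1 = Y_1$ and $X_2 = Y_2$, i.e.\ $(P,a)|R = (P',a')|R$. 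No step here is a serious obstacle; the main point is to read off from Proposition~\ref{BBC:prop:Moore} that the meet of the splitting partitions on $R$ is already pinned down to be $\mathcal{M}_{h+1}|R$, which has only two classes precisely because $\mathcal{A}$ is slow.
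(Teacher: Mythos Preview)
Your approach is exactly what the paper intends: the proposition is stated with an immediate \qed as a specialization of Proposition~\ref{BBC:prop:Moore} combined with the index count $|\cM_h|=h+2$, and your first paragraph carries this out correctly.

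The second part, however, is more roundabout than necessary and contains one unjustified assertion. You claim that $(P,a)|R\wedge(P',a')|R$ \emph{refines} $\{R_1,R_2\}$; but a meet over fewer factors is \emph{coarser}, not finer, than the full meet, so only the direction ``is refined by $\{R_1,R_2\}$'' is actually valid from what you wrote. The conclusion you want is still true, and the direct route avoids the detour entirely: since $(P,a)|R$ is one of the factors in
\[
\{R_1,R_2\}=\bigwedge_{b\in A}\bigwedge_{Q\in\cM_h}(Q,b)|R,
\]
we have $\{R_1,R_2\}\le (P,a)|R$; but both sides are $2$-class partitions of $R$, hence $(P,a)|R=\{R_1,R_2\}$. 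The same applies to $(P',a')|R$, and you are done without ever forming the pairwise meet or invoking the combinatorial lemma about four intersections. That lemma is correct, just superfluous here.
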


An automaton is \emph{slow for Hopcroft}\index{automaton!slow for
  Hopcroft}\index{slow automaton!for Hopcroft} if, for \emph{all
  executions} of Hopcroft's algorithm, the splitters in the current
waiting set either do not split or split in the same way: there is a
unique class that is split into two classes, and always into the same
two classes.

More formally, at each step $(\cW,\cP)$ of an execution, there is
at most one class $R$ in the current partition $\cP$ that is split, and
for all splitters $(P,a)$ and $(P',a')$ in $\cW$ that split $R$, one has
$(P,a)|R=(P',a')|R$. 

The definition is close to the statement in
Proposition~\ref{BBC:cor:Mooreslow} above, and indeed, one has the
following property.

\begin{theorem}
  An automaton is slow for Moore if and only if it is slow for Hopcroft.
\end{theorem}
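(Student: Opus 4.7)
The plan is to prove each direction by tracking, step by step, the current partition $\cP$ in an arbitrary execution of Hopcroft's algorithm against the Moore chain $\cM_0 < \cM_1 < \cdots < \cM_{n-2}$. The underlying invariant is that, for a slow automaton, $\cP$ always coincides with some $\cM_h$ and that the two algorithms proceed in lockstep, one class split at a time.

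For the direction \emph{slow for Moore} $\Rightarrow$ \emph{slow for Hopcroft}, I would argue by induction on Hopcroft's steps that $\cP = \cM_h$ for some~$h$. Initially $\cP_0 = \{F,F^c\} = \cM_0$. Hopcroft's update rules maintain the built-in invariant that every splitter $(W,a) \in \cW$ has first component $W \in \cP$. So when $\cP = \cM_h$, each such $W$ lies in $\cM_h$, making $(W,a)$ a Moore splitter at level~$h$. By Proposition~\ref{BBC:cor:Mooreslow}, any splitter of this form can split at most the unique Moore-splittable class $R_h$, and always into the same pair $\{R_h',R_h''\}$; this is exactly the slow-for-Hopcroft condition at the step, and after an actual split $\cP$ becomes $\cM_{h+1}$, propagating the invariant.

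For the converse, \emph{slow for Hopcroft} $\Rightarrow$ \emph{slow for Moore}, I would first observe that the slow-for-Hopcroft condition determines the next split uniquely at every configuration, so the chain of distinct partitions $\cQ_0 < \cQ_1 < \cdots < \cQ_K$ produced in any execution is the same, with $\cQ_{i+1}$ obtained from $\cQ_i$ by splitting some class $R_i$ into a pair $\{R_i',R_i''\}$. I then prove $\cQ_i = \cM_i$ by induction. Granted $\cQ_i = \cM_i$, the inductive step reduces to showing that every Moore splitter $(P,a)$ with $P \in \cM_i$, $a \in A$, either splits no class of $\cM_i$ or splits exactly $R_i$ into exactly $\{R_i',R_i''\}$, so that Moore's full refinement $\cM_{i+1}$ coincides with $\cQ_{i+1}$.

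The main obstacle is this last claim, since the slow-for-Hopcroft hypothesis a priori controls only splitters currently sitting in some waiting set, not all Moore splitters. To close the gap I would invoke Proposition~\ref{BBC:prop:Hopcroftsplitter}: $(P,a)|R \ge \bigwedge_{(W,a)\in\cW}(W,a)|R$ for every $P \in \cP$, $a \in A$, and $R$ a subset of a class of $\cP$. Applied at the moment when $\cP = \cM_i$ and a splitter in $\cW$ with the driving letter already realizes the split $R_i \to \{R_i',R_i''\}$, the right-hand side equals $\{R_i',R_i''\}$; then any $(P,a)|R_i$ coarser than this two-block partition must itself be either $\{R_i\}$ or $\{R_i',R_i''\}$. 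For Moore splitters whose letter is not yet witnessed in $\cW$, or that would split a class $R \neq R_i$, one varies the execution: by the freedom in splitter processing order, together with the propagation rules for splitters through splits and Lemma~\ref{BBC:le:magique}, one arranges for the offending splitter---or an equivalent one---to enter some $\cW$ while $\cP = \cM_i$, directly contradicting the slow-for-Hopcroft hypothesis. Constructing these auxiliary executions is the technical heart of the converse; once the induction closes, $K = D$ and each $\cM_h$ adds exactly one class, so $\A$ is slow for Moore.
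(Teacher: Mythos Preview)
Your forward direction (slow for Moore $\Rightarrow$ slow for Hopcroft) is correct and is exactly the paper's argument.

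For the converse, your overall plan---prove inductively that the Hopcroft chain $\cQ_0<\cQ_1<\cdots$ coincides with the Moore chain $\cM_0<\cM_1<\cdots$---is also the paper's plan. The difference is that you underuse Proposition~\ref{BBC:prop:Hopcroftsplitter}. You apply it only for the ``driving letter'' and then propose to handle the remaining Moore splitters by constructing auxiliary executions in which the offending splitter appears in some waiting set. That detour is both unnecessary and doubtful: Hopcroft's update rule puts only $\min(P',P'')$ into $\cW$, so there is no reason a given class $P\in\cM_i$ must ever occur as the first component of a waiting splitter, and your sketch gives no mechanism to force it.

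The paper avoids this entirely. At a configuration with $\cP=\cM_i$, Proposition~\ref{BBC:prop:Hopcroftsplitter} applies for \emph{every} letter $a$ and \emph{every} $P\in\cP$, giving
\[
\bigwedge_{a\in A}\bigwedge_{P\in\cM_i}(P,a)|R \;\ge\; \bigwedge_{a\in A}\bigwedge_{(W,a)\in\cW}(W,a)|R \;=\; \bigwedge_{(W,b)\in\cW}(W,b)|R
\]
for each class $R\in\cM_i$. The left side is exactly the Moore refinement of $R$ inside $\cM_{i+1}$ (Proposition~\ref{BBC:prop:Moore}); the right side, by slow-for-Hopcroft, is $\{R\}$ for $R\neq R_i$ and $\{R_i',R_i''\}$ for $R=R_i$. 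Being coarser than a two-block partition forces the Moore refinement of $R_i$ to be either $\{R_i\}$ or $\{R_i',R_i''\}$, and it cannot be the former because the very splitter in $\cW$ that cuts $R_i$ has first component in $\cM_i$ and is therefore itself a Moore splitter at level~$i$. Hence $\cM_{i+1}=\cQ_{i+1}$, closing the induction without any auxiliary executions. Your ``technical heart'' dissolves once you take the meet over all letters in Proposition~\ref{BBC:prop:Hopcroftsplitter} rather than fixing one.
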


\begin{proof} Let $\A$ be a finite automaton.  We first suppose that
  $\A$ is slow for Moore.  We consider an execution of Hopcroft's
  algorithm, and we prove that each step of the execution that changes
  the partition produces a Moore partition.

  This holds for the initial configuration $(\cW,\cP)$, since
  $\cP=\cM_0$. Assume that one has $\cP=\cM_h$ for some configuration
  $(\cW,\cP)$ and some $h\ge0$.  Let $R$ be the class of $\cM_h$ split
  by Moore's algorithm.

  Let $S\in\cW$ be the splitter chosen in Hopcroft's algorithm. Then
  either $S$ splits no class, and the partition remains equal to
  $\cM_h$ or by Proposition~\ref{BBC:cor:Mooreslow} it splits the class $R$.
  In the second case, this class is split by $S$ into two new classes,
  say $R'$ and $R''$. The partition
  $\cP'=\cP\setminus\{R\}\cup\{R',R''\}$ is equal to $\cM_{h+1}$.

  Conversely, suppose that $\A$ is slow for Hopcroft. We show that it
  is also slow for Moore by showing that the partition $\cM _ {h+1}$
  has only one class more than $\cM_h$. For this, we use
  Proposition~\ref{BBC:prop:Moore} which states that each class $R$ in
  $\cM_h$ is refined in $\cM _ {h+1}$ into the partition $\mathcal{R}$
  given by
  \begin{displaymath}
    \mathcal{R}=\bigwedge_{a\in A}\bigwedge_{P\in\cM_{h}}(P,a)|R\,.
  \end{displaymath}
  We show by induction on the number of steps that, in any execution
  of Hopcroft's algorithm, $\cP=\cM_h$ for some configuration
  $(\cP,\cW)$.  This holds for the initial configuration.  Let $(W,a)$
  be some splitter in $\cW $. It follows from
  Proposition~\ref{BBC:prop:Hopcroftsplitter} that
  \begin{displaymath}
    \mathcal{R}\ge \bigwedge_{a\in A}\bigwedge_{(W,a)\in \cW}(W,a)|R\,.
  \end{displaymath}
  Thus the partition $\mathcal{R}$ is coarser than that the partition
  of $R$ obtained by Hopcroft's algorithm. Since the automaton is slow
  for Hopcroft, the partition on the right-hand side has at most two
  elements. More precisely, there is exactly one class $R$ that is
  split by Hopcroft's algorithm into two classes. Since Moore's
  partition $\cM _ {h+1}$ is coarser, it contains precisely these
  classes. This proves the claim.
\end{proof}

%
%
\subsection{Examples}

  \begin{figure}
    \centering 
    \scalebox{0.85}{\gasset{Nw=6,Nh=6}
    \begin{picture}(60,18)(0,-8)
      \node[Nmarks=i,iangle=180](1)(0,0){$0$}
      \node(2)(15,0){$1$}
      \node(3)(30,0){$2$}
      \node[Nframe=n,Nw=8](d)(45,0){$\cdots$}
      \node[Nmarks=f,fangle=0](n)(60,0){$n$}
      \drawloop(n){$a$}
      \drawedge(1,2){$a$}
      \drawedge(2,3){$a$}
      \drawedge(3,d){$a$}
      \drawedge(d,n){$a$}
    \end{picture}}  
    \caption{An automaton over one letter recognizing the set of words
      of length at least $n$.}
    \label{BBC:fig:slow1}
  \end{figure}
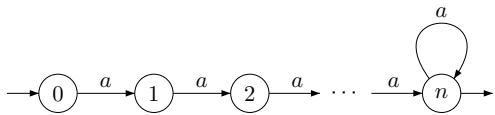

\begin{example}
  The simplest example is perhaps the automaton given in
  Figure~\ref{BBC:fig:slow1}.  For $0\le h\le n-1$, the partition
  $\cM_h$ is composed of the class $\{0,\ldots,n-h-1\}$, and of the
  singleton classes $\{n-h\}$, $\{n-h+1\}$,\dots, $\{n\}$. At each
  step, the last state is split off from the class
  $\{0,\ldots,n-h-1\}$.
\end{example}

\begin{example}
  The automaton of Figure~\ref{BBC:fig:slow2} recognizes the set
  $D^{(n)}$ of Dyck
  words $w$ over $\{a,b\}$ such that $0\le|u|_a-|u|_b\le n$ for all
  prefixes $u$ of $w$. The partition $\cM_h$, for $0\le
  h\le n-1$, is composed of $\{0\}, \ldots, \{h\}$, and
  $\{\infty,h+1,\dots,n\}$. At $h=n$, the state $\{\infty\}$ is
  separated from state $n$.
\end{example}

  \begin{figure}
    \centering \scalebox{0.85}{\gasset{Nw=6,Nh=6,curvedepth=4}
   \begin{picture}(75,18)(-15,-8)
      \node(x)(-15,0){$\infty$}
      \node[Nmarks=if,iangle=90,fangle=90](1)(0,0){$0$}
      \node(2)(15,0){$1$}
      \node(3)(30,0){$2$}
      \node[Nframe=n,Nw=8](d)(45,0){$\cdots$}
      \node(n)(60,0){$n$}
      \drawedge[curvedepth=0](1,x){$b$}
      \drawedge(2,1){$b$}
      \drawedge(3,2){$b$}
      \drawedge(d,3){$b$}
      \drawedge(n,d){$b$}
      \drawedge(1,2){$a$}
      \drawedge(2,3){$a$}
      \drawedge(3,d){$a$}
      \drawedge(d,n){$a$}
   \end{picture}}   
    \caption{An automaton recognizing the Dyck words of ``height'' at most
      $n$.}
    \label{BBC:fig:slow2}
  \end{figure}
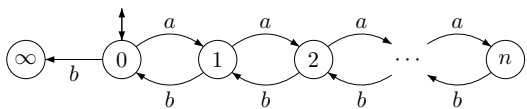

\begin{example}\label{BBC:example:slow}
  Let $w=b_1\cdots b_n$ be a word of length~$n$ over the binary
  alphabet $\{0,1\}$.  We define an automaton $\mathcal{A}_w$ over the
  unary alphabet $\{a\}$ as follows.  The state set of $\mathcal{A}_w$
  is $\{1,\ldots,n\}$ and the next state function is defined by $i
  \cdot a = i+1$ for $i < n$ and $n \cdot a = 1$.  Note that the
  underlying labeled graph of~$\mathcal{A}_w$ is just a cycle of
  length~$n$.  The final states really depend on~$w$.  The set of
  final states of~$\mathcal{A}_w$ is $F = \{ 1 \le i \le n \mid b_i =
  1\}$. We call such an automaton a \emph{cyclic
    automaton}\index{cyclic automaton}\index{automaton!cyclic}.
  \begin{figure}
    \begin{center}
      \scalebox{0.85}{%
      \gasset{Nw=6,Nh=6,curvedepth=1.5}
      \begin{picture}(35,35)(-15,-15) \node[Nmarks=i](1)(-15,0){$1$}
        \node[Nmarks=r](2)(-10.6,10.6){$2$} \node(3)(0,15){$3$}
        \node(4)(10.6,10.6){$4$} \node[Nmarks=r](5)(15,0){$5$}
        \node(6)(10.6,-10.6){$6$} \node[Nmarks=r](7)(0,-15){$7$}
        \node(8)(-10.6,-10.6){$8$} \drawedge(1,2){$a$}
        \drawedge(2,3){$a$} \drawedge(3,4){$a$} \drawedge(4,5){$a$}
        \drawedge(5,6){$a$} \drawedge(6,7){$a$} \drawedge(7,8){$a$}
        \drawedge(8,1){$a$}
      \end{picture}}
      \caption{Cyclic automaton $\mathcal{A}_w$ for $w =
        01001010$. Final states are circled.}
      \label{BBC:fig:11101000}
    \end{center}
  \end{figure}
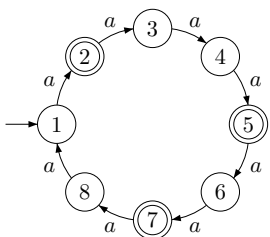

  For a binary word $u$, we define $Q_u$ to be the set of states of
  $\mathcal{A}_w$ which are the starting positions of circular
  occurrences of~$u$ in~$w$.  If $u$ is the empty word, then $Q_u$ is
  by convention the set~$Q$ of all states of~$\mathcal{A}_w$.  By
  definition, the set~$F$ of final states of~$\mathcal{A}_w$ is $Q_1$
  while its complement $F^c$ is~$Q_0$.

  Consider the automaton $\mathcal{A}_w$ for $w = 01001010$ given in
  Fig.~\ref{BBC:fig:11101000}.  The sets $Q_1$, $Q_{01}$ and $Q_{11}$ of
  states are respectively $\{2,5,7\}$, $\{1,4,6\}$ and $\emptyset$. If
  $w$ is a Sturmian word, then the automaton $\mathcal{A}_w$ is slow.
\end{example}

Slow automata are closely related to binary Sturmian trees. Consider
indeed a finite automaton, for instance over a binary alphabet. To
this automaton corresponds an infinite binary tree, composed of all
paths in the automaton. The nodes of the tree are labeled with the
states encountered on the path.  For each integer $h\ge0$, the number
of distinct subtrees of height $h$ is equal to the number of classes
in the Moore partition $\cM_h$.  It follows that the automaton is slow
if and only if there are $h+1$ distinct subtrees of height $h$ for all
$h$: this is precisely the definition of \emph{Sturmian
  trees}\index{Sturmian!tree}\index{tree!Sturmian}, as given in
\cite{Berstel&Boasson&Carton&Fagnot:2010}.

We consider now the problem of showing that the running time $O(n\log
n)$ for Hopcroft's algorithm on $n$-state automata is tight. The
algorithm has a degree of freedom because, in each step of its main
loop, it allows one to choose the splitter to be processed. Berstel and
Carton~\cite{Berstel&Carton:2004} introduced a family of finite
automata based on de Bruijn words. These are exactly the cyclic
automata $\A_w$ of Example~\ref{BBC:example:slow} where $w$ is a
binary de Bruijn word. They showed that there exist some ``unlucky''
sequence of choices that slows down the computation to achieve the lower
bound $\Omega(n\log n)$.

In the papers~\cite{Castiglione&Restivo&Sciortino:2007}
and~\cite{Castiglione&Restivo&Sciortino:2008}, Castiglione, Restivo
and Sciortino replace de Bruijn words by Fibonacci words. They observe
that for these words, and more generally for all circular standard
Sturmian words, there is no more choice in Hopcroft's
algorithm. Indeed, the waiting set always contains only one
element. The uniqueness of the execution of Hopcroft's algorithm
implies by definition that the associated cyclic automata for Sturmian
words are slow.

They show that, for Fibonacci words, the unique execution of
Hopcroft's algorithm runs in time $\Omega(n\log n)$, so that the
worst-case behavior is achieved for the cyclic automata of Fibonacci
words. The computation is carried out explicitly, using connections
between Fibonacci numbers and Lucas
numbers. In~\cite{Castiglione&Restivo&Sciortino:2009}, they give a
detailed analysis of the reduction process that is the basis of their
computation, and they show that this process is isomorphic, for all
standard Sturmian words, to the refinement process in Hopcroft's
algorithm.

In~\cite{Berstel&Boasson&Carton:2009b}, the analysis of the running
time of Hopcroft's algorithm is extended to cyclic automata of
standard Sturmian words. It is shown that the directive sequences for
which Hopcroft's algorithm has worst-case running time are those
sequences $(d_1,d_2,d_3,\ldots)$ for which the sequence of geometric
means $((p_n)^{1/n})_{n\ge1}$, where $p_n=d_1d_2\cdots d_n$, is bounded.



%
%
\section{Minimization by fusion}\label{BBC:sec:fusion}

In this section, we consider the minimization of automata by fusion of
states. An important application of this method is the computation of
the minimal automaton recognizing a given finite set of words. This is
widely used in computational linguistics for the space-efficient
representation of dictionaries.


Let $\A$ be a deterministic automaton over the alphabet $A$, with set
of states $Q$.  The \emph{signature}\index{signature of a
  state}\index{state!signature} of a state $p$ is the set of pairs
$(a,q)\in A\times Q$ such that $p\cdot a=q$, together with a Boolean
value denoting whether $p$ is final or not.  Two states $p$ and
$q$ are called \emph{mergeable}\index{mergeable
  states}\index{state!mergeable} if and only if they have the same
signature.  The \textit{fusion}\index{fusion of
  states}\index{state!fusion} or \textit{merge}\index{merge of
  states}\index{state!merge} of two mergeable states $p$ and $q$
consists in replacing $p$ and $q$ by a single state.  The state
obtained by fusion of two mergeable states has the same signature.

Minimization of an automaton by a sequence of fusion of states with
the same signature is not always possible. Consider the two-state
automaton over the single letter $a$ given in
Figure~\ref{BBC:fig:nofusion} which recognizes $a^*$. It is not minimal. The
signature of state $1$ is $+,(a,2)$ and the signature of state $2$ is
$+,(a,1)$ (here ``$+$'' denotes an accepting state), so the states have
different signatures and are not mergeable.

\begin{figure} [h]
  \centering
  \scalebox{0.85}{\gasset{Nw=6,Nh=6,curvedepth=5}
  \begin{picture}(24,12)(0,-6)
    \node[Nmarks=if,iangle=90,fangle=90](1)(0,0){$1$}
    \node[Nmarks=f,fangle=90](2)(24,0){$2$}
    \drawedge(1,2){$a$}
    \drawedge(2,1){$a$}
  \end{picture}}   
  \caption{An automaton recognizing the set $a^*$ which can not be
    minimized by fusion of its states.}
  \label{BBC:fig:nofusion}
\end{figure}
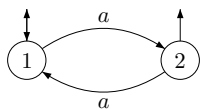

\subsection{Local automata}

M.-P. B\'eal and M. Crochemore
\cite{Beal&Crochemore:2007} designed an
algorithm for minimizing a special class of deterministic automata by
a sequence of state mergings. These automata are called irreducible
\emph{local} automata. They occur quite naturally in symbolic dynamics. The
running time of the algorithm is $O(\min(m(n-r+1),m\log n))$, where
$m$ is the number of edges, $n$ is the number of states, and $r$ is
the number of states of the minimized automaton. In particular, the
algorithm is linear when the automaton is already minimal. Hopcroft's
algorithm has running time $O(kn\log n)$, where $k$ is the size of the
alphabet, and since $kn\ge m$, it is worse than B\'eal and
Crochemore's algorithm. Moreover, their algorithm does not require the
automaton to be complete.

The automata considered here have several particular features. First,
all states are both initial and final. Next, they are
\emph{irreducible}, that is, their underlying graph is strongly
connected. Finally, the automata are \emph{local}\index{local
  automaton}\index{automaton!local}. By definition, this means that
two distinct cycles carry different labels. This implies that the
labels of a cycle are primitive words, since otherwise there exist
different traversals of the cycle which have the same label.  In
\cite{Beal&Crochemore:2007}, the constructions and proofs are done for
a more general family of automata called AFT (for automata of
\emph{almost finite type}\index{automaton!almost finite type}). We
sketch here the easier case of local automata.

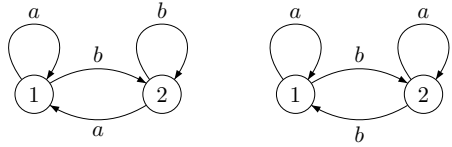
\begin{figure}
  \centering\scalebox{0.85}{
  \gasset{Nh=6,Nw=6}
  \begin{picture}(20,20)(0,-5)
    \node(1)(0,0){$1$}
    \node(2)(20,0){$2$}
    \drawloop(1){$a$}\drawloop(2){$b$}
    \drawedge[curvedepth=4](1,2){$b$}
    \drawedge[curvedepth=4](2,1){$a$}
  \end{picture}\hspace*{2cm}
  \begin{picture}(20,20)(0,-5)
    \node(1)(0,0){$1$}
    \node(2)(20,0){$2$}
    \drawloop(1){$a$}\drawloop(2){$a$}
    \drawedge[curvedepth=4](1,2){$b$}
    \drawedge[curvedepth=4](2,1){$b$}
  \end{picture}}
 \caption{The automaton on the left is local, the automaton on the
   right is not because of the two loops labeled $a$, and because the
   label $bb$ of the cycle through $1$ and $2$ is not a primitive word.}
   \label{BBC:fig:localnonlocal}
\end{figure}

\noindent Since all states are final, two states $p$ and $q$ of an
automaton are \emph{mergeable} if and only if, for all letters $a\in
A$, $p\cdot a$ is defined if and only if $q\cdot a$ and, if this is
the case, then $p\cdot a=q\cdot a$.
 
The basic proposition is the following. It shows that an irreducible
local automaton can be minimized by a sequence of fusion of states.

\begin{proposition}
  If an irreducible local automaton is not minimal, then at least two
  of its states are mergeable.\qed
\end{proposition}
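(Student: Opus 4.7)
The plan is to introduce an auxiliary directed graph on pairs of distinct Nerode-equivalent states and use the local hypothesis to show this graph is acyclic; any sink will then be the mergeable pair we seek. Let $\equiv$ denote the Nerode equivalence on $\A$ and set $K=\{(p,q)\in Q\times Q : p\neq q \text{ and } p\equiv q\}$. The assumption that $\A$ is not minimal is exactly that $K\neq\emptyset$. Because every state of $\A$ is final, $p\equiv q$ forces $p\cdot a$ to be defined if and only if $q\cdot a$ is, for every letter $a\in A$; and since $\equiv$ is a right congruence, $p\cdot a\equiv q\cdot a$ whenever both sides are defined.

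Next I define $G$ to be the directed graph on vertex set $K$ with edges $(p,q)\to(p\cdot a,q\cdot a)$ for each $a\in A$ such that $p\cdot a$ is defined and $p\cdot a\neq q\cdot a$. The central step is to prove that $G$ is acyclic. Suppose for contradiction that $(p_0,q_0)\to(p_1,q_1)\to\cdots\to(p_n,q_n)=(p_0,q_0)$ is a cycle of $G$ with label $w=a_1\cdots a_n$ of length $n\geq 1$. Then $p_0\cdot w=p_0$ and $q_0\cdot w=q_0$ exhibit two closed walks in $\A$ bearing the same label $w$ but starting at the distinct states $p_0\neq q_0$, hence two distinct cycles of $\A$ sharing the label $w$, in contradiction with the locality of $\A$. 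I expect this to be the main obstacle of the proof, and one must take care to match the paper's convention under which two closed walks with distinct starting states count as distinct cycles --- a convention that is consistent with the parenthetical observation in the definition that cycle labels in a local automaton must be primitive.

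Once acyclicity is established, since $G$ is finite and has a nonempty vertex set, it must contain a vertex $(p,q)$ with no outgoing edge. The absence of outgoing edges means that for every $a\in A$, either $p\cdot a$ is undefined or $p\cdot a=q\cdot a$. Combined with the definedness equivalence derived from $p\equiv q$ and with the fact that all states of $\A$ are final, this is precisely the statement that $p$ and $q$ have identical signatures, that is, $p$ and $q$ are mergeable, which concludes the argument.
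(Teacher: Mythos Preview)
The paper does not actually prove this proposition: the \qed\ follows the statement directly, and the surrounding text defers to B\'eal and Crochemore~\cite{Beal&Crochemore:2007} for details. So there is no in-paper argument to compare against.

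Your proof is correct. The auxiliary graph $G$ on pairs of distinct equivalent states, with edges given by simultaneous transitions that do not collapse the pair, is exactly the right object: a cycle in $G$ labelled $w$ yields two closed walks $p_0\tto{w}p_0$ and $q_0\tto{w}q_0$ at distinct states, contradicting locality under the paper's convention (which, as you rightly note, is the one forced by the primitivity remark). Acyclicity of the finite nonempty $G$ then guarantees a sink, and your derivation that a sink is a mergeable pair is clean --- the step ``all states final $\Rightarrow$ $p\equiv q$ forces matching domains of definition'' is the only place the all-final hypothesis enters, and it is used correctly (if $p\cdot a$ is defined then $a\in L_p=L_q$, hence $q\cdot a$ is defined).

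One small remark: your argument never invokes irreducibility. This is not a defect --- the hypothesis is simply stronger than what your proof needs --- but it is worth being aware of, since in the B\'eal--Crochemore setting irreducibility is used elsewhere (for the complexity analysis and for the broader AFT case) rather than for this particular existence statement.
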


The minimization algorithm assumes that the alphabet is totally
ordered. It uses the notion of partial signature. First, with each
state $q$ is associated the signature $\sigma(q)=a_1p_1a_2p_2\cdots
a_mp_m$, where $\{(a_1,p_1),\ldots,(a_m,p_m)\}$ is the signature of
$q$, and the sequence is ordered by increasing value of the letters.
Since all states are final, the Boolean indicator reporting this is
omitted. A \emph{partial signature}\index{partial
  signature}\index{state!partial signature}\index{signature!partial}
is any prefix $a_1p_1a_2p_2\cdots a_ip_i$ of a signature.

A first step consists in building a \emph{signature
  tree}\index{tree!signature}\index{signature!tree} which 
represents the sets of states sharing a common partial signature. The
root of the tree represents the set of all states, associated to the
empty signature. A node representing the sets of states with a partial
signature $a_1p_1a_2p_2\cdots a_ip_i$ is the parent of the nodes
representing the sets of states with a partial signature
$a_1p_1a_2p_2\cdots a_ip_ia_{i+1}p_{i+1}$. As a consequence, leaves
represent full signatures. All states that correspond to a leaf are
mergeable.

When mergeable states are detected in the signature tree, they can be
merged. Then the signature tree has to be updated, and this is the
difficult part of the algorithm.

\subsection{Bottom-up  minimization}

In this section, all automata are finite, acyclic, deterministic and
trim.  
A state $p$ is called \textit{confluent}\index{confluent
  states}\index{state!confluent} if there are at least two edges in
$\A$ ending in $p$.

A \textit{trie}\index{trie}\index{automaton!trie} is an automaton
whose underlying graph is a tree.  Thus an automaton is a trie if and
only if it has no confluent state.

\textit{Bottom-up minimization} is the process of minimizing an
acyclic automaton by a bot\-tom-up traversal. In such a traversal,
children of a node are treated before the node itself. During the
traversal, equivalent states are detected and merged. The basic
property of bottom-up minimization is that the check for (Nerode)
equivalence reduces to equality of signatures. The critical point is to
organize the states that are candidates in order to do this check
efficiently.

The bottom-up traversal itself may be organized in several ways, for
instance as a depth-first search with the order of traversal of the
children determined by the order on the labels of the edges. Another
traversal is by increasing height, as done in Revuz's algorithm given
next.

One popular method for the construction of a minimal automaton for a
given finite set of words consists in first building a trie for this
set and then minimizing it. Daciuk \textit{et
  al.}~\cite{Daciuk&Mihov&Watson&Watson:2000} propose an incremental
version which avoids this intermediate construction.

Recall that the \textit{signature}\index{signature of a
  state}\index{state!signature} of a state $p$ is the set of pairs
$(a,q)$ such that $p\cdot a=q$ together with a Boolean value
indicating whether the state is final or not.  It is tacitly
understood that the alphabet of the automaton is ordered.  The
signature of a state is usually considered as the ordered sequence of
pairs, where the order is determined by the letters. It is important
to observe that the signature of a state evolves when states are
merged. As an example, the state $6$ of the automaton on the left of
Figure~\ref{BBC:fig:AZ1} has signature $+,(a,3),(b,10)$, and the same
state has signature $+,(a,3),(b,7)$ in the automaton on the right of
the figure.

As already mentioned, if the minimization of
the children of two states $p$ and $q$ has been done, then $p$ and $q$
are (Nerode) equivalent if and only if they have the same signature.
So the problem to be considered is the bookkeeping of signatures, that
is the problem of detecting whether the signature of the currently
considered state has already occured before. In practical
implementations, this is done by hash coding the signatures. This
allows one to perform the test in constant average time. One
remarkable exception is Revuz's algorithm to be presented now, and its
extension by Almeida and Zeitoun that we describe later.

\subsection{Revuz's algorithm}

Revuz \cite{Revuz:1992} was the first to give an explicit description
of a linear time implementation of the bottom-up minimization
algorithm. The principle of the algorithm was also described by
\cite{Krivol:1991}. 

\begin{figure}
  \setlength{\commentspace}{6cm}
\hrule\smallskip\par
  \begin{algorithmic}
  \FUNC{Revuz$(\A)$}
  \FOR{$h=0$ \TO $\textsc{Height}(\A)$}
    \STATE\algcomment{-1}{Compute states of height $h$}%
    $S\gets\textsc{GetStatesForHeight}(h)$
    \STATE\algcomment{-1}{Compute and sort signatures}$\textsc{SortSignatures}(S)$ 
    \FOR{\algcomment{0.5}{Merge mergeable states}$s\in S$}
      \IF{$s$ and $s.next$ have the same signature}
        \STATE\algcomment{1}{Mergeable  states are consecutive}$\textsc{Merge}(s,s.next)$
      \ENDIF
    \ENDFOR
  \ENDFOR
  \end{algorithmic}
\hrule\smallskip
\caption{Revuz's minimization algorithm.}\label{BBC:alg:revuz}
\end{figure}

Define the \emph{height}\index{height of a state}\index{state!height}
of a state $p$ in an acyclic automaton to be the length of the longest
path starting at $p$. It is also the length of the longest word in the
language of the subautomaton at $p$. Two equivalent states have the
same height. Revuz's algorithm operates by increasing height. It is
outlined in Figure~\ref{BBC:alg:revuz}. Heights may be computed in
linear time by a bottom-up traversal. The lists of states of a given
height are collected during this traversal. The signature of a state
is easy to compute provided the edges starting in a state have been
sorted (by a bucket sort for instance to remain within the linear time
constraint).  Sorting states by their signature again is done by a
lexicographic sort.  As for Moore's algorithm, the last step can by
done by a simple scan of the list of states since states with equal
signature are consecutive in the sorted list.

The whole algorithm can be implemented to run in time $O(m)$ for an
automaton with $m$ edges.  

Revuz's algorithm relies on a particular bottom-up traversal of the
trie.  This traversal is defined by increasing height of states, and
it makes the check for equivalent states easier. With another method
for checking signatures in mind, like hash coding, the algorithm may
be organized in a different way.  For instance, the traversal by
heights can be replaced by a traversal by lexicographic order. The
last item in the algorithm may be replaced by another check.
Whenever a state has been found which must be in the minimal
automaton, its hash code is \emph{registered}. When the signature of a
state is computed, one checks whether its hash code is registered.
If not, it is added to the register, otherwise it is replaced by the
hash code.

Several implementations have been given in various packages. A
comparison has been given in \cite{Daciuk:2002}. See also
Table~\ref{BBC:table:size-performances} below for numerical data.


%
%

\subsection{The algorithm of Almeida and Zeitoun}

Almeida and Zeitoun \cite{Almeida&Zeitoun:2008} consider an extension
of the bottom-up minimization algorithm to automata which contain only
simple cycles. They describe a linear time algorithm for these
automata.

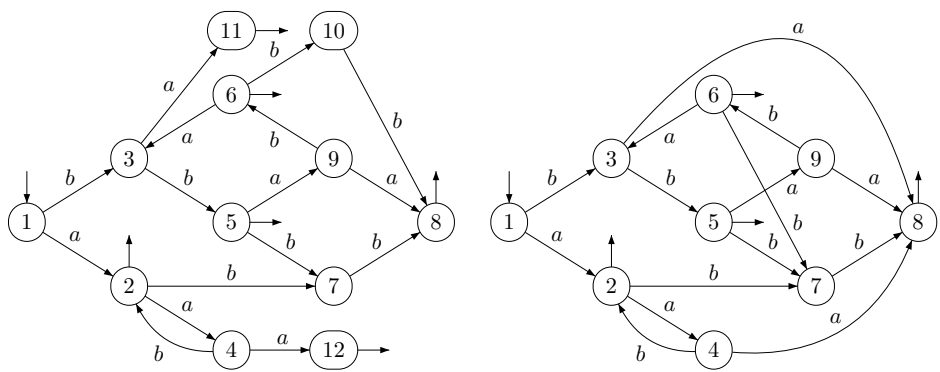
\begin{figure}
\centering\scalebox{0.85}{%
  \begin{picture}(64,60)(0,5)
    \gasset{Nh=6,Nadjust=w, Nadjustdist=2}
    \node[Nmarks=i,iangle=90](1)(0,30){$1$}
    \node[Nmarks=f,fangle=90](2)(16,20){$2$}
    \node(3)(16,40){$3$}
    \node(4)(32,10){$4$}
    \node[Nmarks=f,fangle=0](5)(32,30){$5$}
    \node[Nmarks=f,fangle=0](6)(32,50){$6$}
    \node(7)(48,20){$7$}
    \node[Nmarks=f,fangle=90](8)(64,30){$8$}
    \node(9)(48,40){$9$}
    \node(10)(48,60){$10$}
    \node[Nmarks=f,fangle=0](11)(32,60){$11$}
    \node[Nmarks=f,fangle=0](12)(48,10){$12$}
    \drawedge[curvedepth=4](4,2){$b$}
    \drawedge(4,12){$a$}
    \drawedge(5,7){$b$}
    \drawedge(5,9){$a$}
    \drawedge(6,3){$a$}
    \drawedge(6,10){$b$}
    \drawedge(7,8){$b$}
    \drawedge(9,8){$a$}
    \drawedge(9,6){$b$}
    \drawedge(10,8){$b$}
    \drawedge(3,5){$b$}
    \drawedge(3,11){$a$}
    \drawedge(2,4){$a$}
    \drawedge(2,7){$b$}
    \drawedge[ELpos=40](1,2){$a$}
    \drawedge(1,3){$b$}
  \end{picture}
\qquad\quad
 \begin{picture}(64,55)(0,5)
    \gasset{Nh=6,Nadjust=w, Nadjustdist=2}
    \node[Nmarks=i,iangle=90](1)(0,30){$1$}
    \node[Nmarks=f,fangle=90](2)(16,20){$2$}
    \node(3)(16,40){$3$}
    \node(4)(32,10){$4$}
    \node[Nmarks=f,fangle=0](5)(32,30){$5$}
    \node[Nmarks=f,fangle=0](6)(32,50){$6$}
    \node(7)(48,20){$7$}
    \node[Nmarks=f,fangle=90](8)(64,30){$8$}
    \node(9)(48,40){$9$}
    \drawedge[curvedepth=4](4,2){$b$}
    \drawedge[curvedepth=-8](4,8){$a$}
    \drawedge(5,7){$b$}
    \drawedge[ELpos=70,ELside=r](5,9){$a$}
    \drawedge(6,3){$a$}
    \drawedge[ELpos=70](6,7){$b$}
    \drawedge(7,8){$b$}
    \drawedge(9,8){$a$}
    \drawedge[ELside=r](9,6){$b$}
    \drawedge(3,5){$b$}
    \drawedge[curvedepth=24](3,8){$a$}
    \drawedge(2,4){$a$}
    \drawedge(2,7){$b$}
    \drawedge[ELpos=40](1,2){$a$}
    \drawedge(1,3){$b$}
  \end{picture}
}

\caption{On the left a simple automaton: its nontrivial strongly
  connected components are the cycles $2,4$ and $3,5,9,6$. The
  minimization starts by merging $11$, $12$ and $8$, and also $10$ and
  $7$. This gives the automaton on the right.  }
  \label{BBC:fig:AZ1}
\end{figure}

Let $\A$ be a finite trim automaton. We call it
\emph{simple}\index{simple!automaton}\index{automaton!simple} if every
nontrivial strongly connected component is a simple cycle, that is if
every vertex of the component has exactly one successor vertex in this
component. The automaton given on the left of Figure~\ref{BBC:fig:AZ1}
is simple. Simple automata are interesting because they recognize
exactly the bounded regular languages or, equivalently, the languages
with polynomial growth. These are the simplest infinite regular
languages.

The starting point of the investigation of \cite{Almeida&Zeitoun:2008}
is the observation that minimization can be split into two parts:
minimization of an acyclic automaton and minimization of the set of
strongly connected components. There are three subproblems, namely (1)
minimization of each strongly connected component, (2) identification
and fusion of isomorphic minimized strongly connected components, and (3)
wrapping, which consists in merging states which are equivalent to a
state in a strongly connected component, but which are not in this
component. The authors show that if these subproblems can be solved in
linear time, then, by a bottom-up algorithm which is a sophistication
of Revuz's algorithm, the whole automaton can be minimized in linear
time. Almeida and Zeitoun show how this can be done for simple
automata. The outline of the algorithm is given in
Figure~\ref{BBC:alg:az}.

\begin{figure}[b]
  \setlength{\commentspace}{6cm}
\hrule\smallskip\par
  \begin{algorithmic}
    \FUNC{AlmeidaZeitoun$(\A)$}
    \STATE\algcomment{0}{States and cycles of height $0$}%
    $S\gets\textsc{ZeroHeight}(\A)$ 
    \WHILE{$S\ne\emptyset$}
    \STATE\algcomment{1}{Minimize each cycle}%
    $\textsc{MinimizeCycles}(S)$
    \STATE\algcomment{1}{Compute and sort signatures and merge}%
    $\textsc{MergeIsomorphicCycles}(S)$ 
    \STATE\algcomment{1}{Search states to wrap}%
    $\textsc{Wrap}(S)$ 
    \STATE\algcomment{1}{Compute states for next height}%
    $S\gets\textsc{NextHeight}(\A,S)$ 
    \ENDWHILE
  \end{algorithmic}
\hrule\smallskip
\caption{The  algorithm of Almeida and Zeitoun.}\label{BBC:alg:az}
\end{figure}

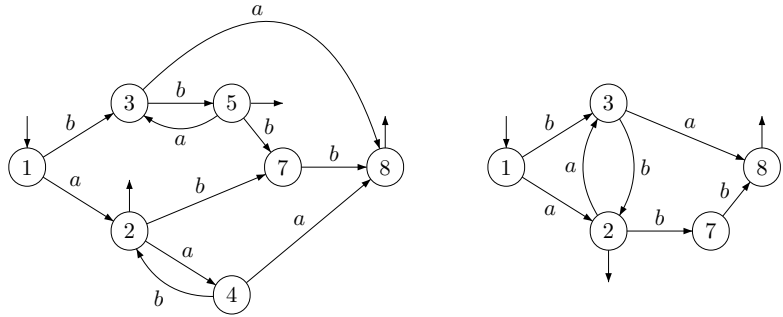
\begin{figure}
\centering\scalebox{0.85}{%
  \begin{picture}(60,55)(0,5)
    \gasset{Nh=6,Nadjust=w, Nadjustdist=2}
    \node[Nmarks=i,iangle=90](1)(0,30){$1$}
    \node[Nmarks=f,fangle=90](2)(16,20){$2$}
    \node(3)(16,40){$3$}
    \node(4)(32,10){$4$}
    \node[Nmarks=f,fangle=0](5)(32,40){$5$}
    \node(7)(40,30){$7$}
    \node[Nmarks=f,fangle=90](8)(56,30){$8$}
    \drawedge[curvedepth=18,ELpos=40](3,8){$a$}
    \drawedge(3,5){$b$}
    \drawedge(2,4){$a$}
    \drawedge(2,7){$b$}
    \drawedge[curvedepth=4](4,2){$b$}
    \drawedge(4,8){$a$}
    \drawedge[curvedepth=4](5,3){$a$}
    \drawedge(5,7){$b$}
    \drawedge(7,8){$b$}
    \drawedge[ELpos=40](1,2){$a$}\drawedge(1,3){$b$}
  \end{picture}
  \qquad\qquad
  \begin{picture}(40,30)(0,5)
    \gasset{Nh=6,Nadjust=w, Nadjustdist=2}
    \node[Nmarks=i,iangle=90](1)(0,30){$1$}
    \node[Nmarks=f,fangle=-90](2)(16,20){$2$}
    \node(3)(16,40){$3$}
    \node(7)(32,20){$7$}
    \node[Nmarks=f,fangle=90](8)(40,30){$8$}
    \drawedge(3,8){$a$}
    \drawedge(2,7){$b$}
    \drawedge[curvedepth=4](3,2){$b$}
    \drawedge[curvedepth=4](2,3){$a$}
    \drawedge(7,8){$b$}
    \drawedge[ELside=r](1,2){$a$}\drawedge(1,3){$b$}
  \end{picture}
}

\caption{The minimization continues by merging the states $5$ and $6$,
  and the states $3$ and $9$. This gives the automaton on the
  left. The last step of minimization merges the states $2$ and $5$,
  and the states $3$ and $4$.}
  \label{BBC:fig:AZ2}
\end{figure}

The algorithm works as Revuz's algorithm as long as no nontrivial
strongly connected components occur. In our example automaton, the
states $8$, $11$ and $12$ are merged, and the states $10$ and $7$ also
are merged. This gives the automaton on the right of
Figure~\ref{BBC:fig:AZ1}.

Then a cycle which has all its  descendants minimized is checked for possible
minimization. This is done as follows: the \emph{weak signature} of a
state $p$ of a cycle is the signature obtained by replacing the name of
its successor in the cycle by a dummy symbol, say $\Box$. In our
example, the weak signatures of the states $3,5,9,6$ are respectively:
\begin{displaymath}
  -a8b\Box\,,\quad+a\Box b7\,,\quad-a8b\Box\,,\quad+a\Box b7\,.
\end{displaymath}
Here we write `${+}$' when the state is final, and `$-$' otherwise.

It is easily seen that the cycle is minimal if and only if the word
composed of the sequence of signatures is primitive. In our example,
the word is not primitive since it is a square, and the cycle can be
reduced by identifying states that are at corresponding positions in
the word, that is states $5$ and $6$ can be merged, and states $3$ and
$9$. This gives the automaton on the left of
Figure~\ref{BBC:fig:AZ2}.



Similarly, in order to check whether two (primitive) cycles can be
merged, one checks whether the words of their weak signatures are
conjugate. In our example, the cycles $2,4$ and $3,5$ have the
signatures
\begin{displaymath}
  +a\Box b7\,,-a8b\Box\quad\text{and}\quad -a8b\Box\,,+a\Box b7\,.
\end{displaymath}
These words are conjugate and the corresponding states can be
merged. This gives the automaton on the right of Figure~\ref{BBC:fig:AZ2}. This
automaton is minimal.

A basic argument for preserving the linearity of the algorithm is the
fact that the minimal conjugate of a word can be computed in linear
time. This can be done for instance by Booth's algorithm (see
\cite{Crochemore&Hancart&Lecroq:2007}). Thus, testing whether a cycle
is minimized takes time proportional to its length, and for each
minimized cycle, a canonical representative, namely the unique minimal
conjugate, which is a Lyndon word\index{Lyndon word}, can be computed
in time proportional to its length. The equality of two cycles then
reduces to the equality  of the two associated words. Finding
isomorphic cycles is accomplished by a lexicographic ordering of the
associated words, followed by a simple scan for equal words.

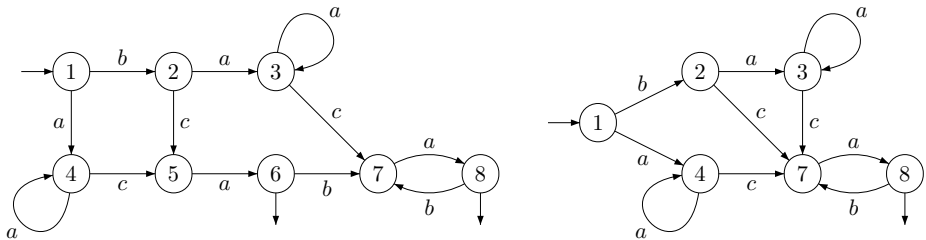
\begin{figure}
\centering\scalebox{0.85}{%
  \begin{picture}(78,36)(-10,-10)
    \gasset{Nh=6,Nadjust=w, Nadjustdist=2}
    \node[Nmarks=i,iangle=180](1)(0,16){$1$}
    \node(2)(16,16){$2$}
    \node(3)(32,16){$3$}
    \node(4)(0,0){$4$}
    \node(5)(16,0){$5$}
    \node[Nmarks=f,fangle=-90](6)(32,0){$6$}
    \node(7)(48,0){$7$}
    \node[Nmarks=f,fangle=-90](8)(64,0){$8$}
    \drawedge[ELside=r](1,4){$a$}\drawedge(1,2){$b$}
    \drawedge(2,3){$a$}\drawedge(2,5){$c$}
    \drawloop[loopangle=45](3){$a$}\drawedge(3,7){$c$}
    \drawloop[loopangle=-135](4){$a$}\drawedge[ELside=r](4,5){$c$}
    \drawedge[ELside=r](5,6){$a$}
    \drawedge[ELside=r](6,7){$b$}
    \drawedge[curvedepth=3](7,8){$a$}
    \drawedge[curvedepth=3](8,7){$b$}
  \end{picture}
\quad
  \begin{picture}(64,36)(-10,-10)
    \gasset{Nh=6,Nadjust=w, Nadjustdist=2}
    \node[Nmarks=i,iangle=180](1)(0,8){$1$}
    \node(2)(16,16){$2$}
    \node(3)(32,16){$3$}
    \node(4)(16,0){$4$}
    \node(7)(32,0){$7$}
    \node[Nmarks=f,fangle=-90](8)(48,0){$8$}
    \drawedge[ELside=r](1,4){$a$}\drawedge(1,2){$b$}
    \drawedge(2,3){$a$}\drawedge(2,7){$c$}
    \drawloop[loopangle=45](3){$a$}\drawedge(3,7){$c$}
    \drawloop[loopangle=-135](4){$a$}\drawedge[ELside=r](4,7){$c$}
    \drawedge[curvedepth=3](7,8){$a$}
    \drawedge[curvedepth=3](8,7){$b$}
  \end{picture}}
 \caption{The automaton on the left has one minimal cycle of
   height~$1$. By wrapping, states~$6$ and~$8$, and states~$5$ and~$7$
 are merged, respectively, giving the automaton on the right.}
\label{BBC:AZ3}
\end{figure}

A few words on wrapping: it may happen that states of distinct heights
in a simple automaton are equivalent. An example is given in
Figure~\ref{BBC:AZ3}. Indeed, states~$6$ and~$8$ have the same
signature and therefore are mergeable but have height~$1$ and~$0$,
respectively. This situation is typical: when states~$s$ and~$t$ are
mergeable and have distinct heights, and~$t$ belongs to a minimized
component of current height, then $s$ is a singleton component on a
path to the cycle of~$t$. Wrapping consists in detecting these states,
and in ``winding'' them around the cycle. In our example, both~$6$
and~$5$ are wrapped in the component of~$7$ and~$8$. In the algorithm
given above, a wrapping step is performed at each iteration, after the
minimization of the states and the cycles and before computing the
states and cycles of the next height. In our example, after the first
iteration, states~$3$ and~$4$ are mergeable. A second wrapping step
merges~$3$ and~$4$. These operations are reported in
Figure~\ref{BBC:AZ4}. A careful implementation can realize all these
operations in global linear time.

\begin{figure}
\centering\scalebox{0.85}{%
  \begin{picture}(64,30)(-10,-10)
    \gasset{Nh=6,Nadjust=w, Nadjustdist=2}
    \node[Nmarks=i,iangle=180](1)(0,8){$1$}
    \node(2)(16,16){$2$}
    \node(4)(16,0){$4$}
    \node(7)(32,8){$7$}
    \node[Nmarks=f,fangle=-90](8)(48,8){$8$}
    \drawedge[ELside=r](1,4){$a$}\drawedge(1,2){$b$}
    \drawedge(2,4){$a$}\drawedge(2,7){$c$}
    \drawloop[loopangle=-135](4){$a$}\drawedge[ELside=r](4,7){$c$}
    \drawedge[curvedepth=3](7,8){$a$}
    \drawedge[curvedepth=3](8,7){$b$}
  \end{picture}
\qquad\qquad
  \begin{picture}(64,30)(-10,-10)
    \gasset{Nh=6,Nadjust=w, Nadjustdist=2}
    \node[Nmarks=i,iangle=180](1)(0,8){$1$}
    \node(4)(16,8){$4$}
    \node(7)(32,8){$7$}
    \node[Nmarks=f,fangle=-90](8)(48,8){$8$}
    \drawedge(1,4){$a,b$}
    \drawloop[loopangle=-90](4){$a$}\drawedge(4,7){$c$}
    \drawedge[curvedepth=3](7,8){$a$}
    \drawedge[curvedepth=3](8,7){$b$}
  \end{picture}

}
 \caption{The automaton on the left has one minimal cycle of
   height~$1$. By wrapping, states~$6$ and~$8$, and states~$5$ and~$7$
 are merged, respectively, giving the automaton on the right.}
\label{BBC:AZ4}
\end{figure}
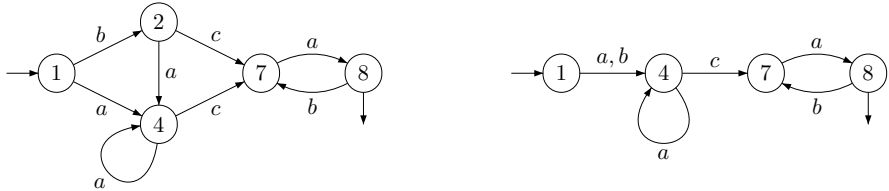





\subsection{Incremental minimization: the algorithm of Daciuk \textit{et al.}}

The algorithm presented in \cite{Daciuk&Mihov&Watson&Watson:2000} is
an incremental algorithm for the construction of a minimal automaton
for a given set of words that is lexicographically sorted. 

The algorithm is easy to implement and it is efficient: the
construction of an automaton recognizing a typical dictionary is done
in a few seconds. Table~\ref{BBC:table:1} was kindly communicated  by
S\'ebastien Paumier. It contains the space saving and the computation
time for dictionaries of various languages.

The algorithm described here is simple because the words are sorted.
There exist other incremental algorithms for the case of unsorted
sets. One of them will be described in the next section. Another
algorithm, called semi-incremental because it requires a final
minimization step, is given in \cite{Watson:2003}.

We start with some notation.
Let $\A=(Q,i,F)$ be a finite, acyclic, deterministic and
trim automaton.
We say that a word $x$ \emph{is in} the automaton $\A$ if
$i\cdot x$ is defined. In other words, $x$ is in $\A$ if $x$ is a
prefix of some word recognized by $\A$.
Let $w$ be a word to be added to the set recognized by an automaton
$\A$. The factorization 
\begin{displaymath}
  w=x\cdot y\,,
\end{displaymath}
where $x$ is the longest prefix of $w$ which is in $\A$, is called the
\emph{prefix-suffix decomposition}\index{prefix-suffix
  decomposition}\index{decomposition!prefix-suffix} of $w$. The word
$x$ (resp. $y$) is the \emph{common prefix}\index{common prefix}
(resp. \emph{corresponding suffix}) of $w$.

One has $x=\e$ if either $w=\e$ or $i\cdot a$ is undefined, where $a$ is
the initial letter of $w$. Similarly, $y=\e$ if $w$ itself is in
$\A$. If $y\ne\e$ and starts with the letter $b$, then $i\cdot
xb=\bot$.

The \textit{insertion} of a word $y$ at state $p$ is an operation that
is performed provided $y=\e$ or $p\cdot b=\bot$, where $b$ is the
initial letter of $y$. If $y=\e$, the insertion simply consists in adding
state $p$ to the set $F$ of final states.  If $y\ne\e$, set
$y=b_1b_2\cdots b_m$. The insertion consists in adding new states
$p_1,\ldots,p_m$ to $Q$, with the next state function defined by
$p\cdot b_1=p_1$ and $p_{i-1}\cdot b_i=p_i$ for $i=2,\ldots, m$.
Furthermore, $p_m$ is added to the set $F$ of final states.

Assume that the language recognized by $\A$ is not empty, and that the
word $w$ is lexicographically greater than all words in $\A$. Then $w$
is not in $\A$. So the common prefix $x$ of $w$ is strictly shorter than
$w$ and the corresponding suffix $y$ is nonempty.

\begin{table}
\centering
\begin{tabular}{|l|r|r||r|r|r|r|r|}%
\hline %
\multicolumn{1}{|c|}{\textbf{File}} & 
\multicolumn{1}{|c|}{\textbf{Lines}} & 
\multicolumn{1}{|c||}{\textbf{Text file}} & 
\multicolumn{3}{c|}{\textbf{Automaton}}& 
\multicolumn{2}{c|}{\textbf{Time}} \\%
\cline{4-8}  &  &         & \textbf{States} & \textbf{Trans.}
&\multicolumn{1}{|c|}{\textbf{Size}}
&\textbf{Revuz}&\textbf{Daciuk}
\\%
\hline%
\hline%
delaf-de   & 189878 & 12.5Mb &  57165 & 103362  & 1.45Mb & 4.22s   & 4.44s \\\hline%
delaf-en   & 296637 & 13.2Mb & 109965 & 224268  & 2.86Mb & 5.94s   & 6.77s \\\hline%
delaf-es   & 638785 & 35.4Mb &  56717 & 117417  & 1.82Mb & 10.61s   & 11.28s \\\hline%
delaf-fi   & 256787 & 24.6Mb & 124843 & 133288  & 4.14Mb & 6.40s   & 7.02s  \\\hline%
delaf-fr   & 687645 & 38.7Mb & 109466 & 240409  & 3.32Mb & 13.03s   & 14.14s  \\\hline%
delaf-gr   &1288218 & 83.8Mb & 228405 & 442977  & 7.83Mb & 28.33s   & 31.02s \\\hline%
delaf-it   & 611987 & 35.9Mb &  64581 & 161718  & 1.95Mb & 10.43s   & 11.46s \\\hline%
delaf-no   & 366367 & 23.3Mb &  75104 & 166387  & 2.15Mb & 6.86s   & 7.44s  \\\hline%
delaf-pl   &  59468 &  3.8Mb &  14128 &  20726  &  502Kb & 1.19s   &  1.30s \\\hline%
delaf-pt   & 454241 & 24.8Mb &  47440 & 115694  &  1.4Mb & 7.87s   & 8.45s \\\hline%
delaf-ru   & 152565 & 10.8Mb &  23867 &  35966  &  926Kb & 2.95s   &  3.17s \\\hline%
delaf-th   &  33551 &  851Kb &  36123 &  61357  &  925Kb & 0.93s   &  1.14s \\\hline%
\end{tabular}
\caption{Running time and space requirement for the computation of
  minimal automata (\textit{communication of S\'ebastien
    Paumier}).\label{BBC:table:size-performances}}  
\label{BBC:table:1}
\end{table}

The incremental algorithm works at follows. At each step, a new word
$w$ that is lexicographically greater than all previous ones is
inserted in the current automaton $\A$. First, the prefix-suffix
decomposition $w=xy$ of $w$, and the state $q=i\cdot x$ are
computed. Then the segment starting at $q$ of the path carrying the suffix
$y'$ of the previously inserted
word $w'$ is minimized by merging states with the
same signature. Finally, the suffix $y$ is inserted at state $q$.
The algorithm is given in Figure~\ref{BBC:alg:daciuk}.

The second step deserves a more detailed description. We observe first
that the word $x$ of the prefix-suffix decomposition $w=xy$ of $w$ is
in fact the greatest common prefix of $w'$ and $w$. Indeed, the word
$x$ is a prefix of some word recognized by $\A$ (here $\A$ is the
automaton before adding $w$), and since $w'$ is the greatest word in
$\A$, the word $x$ is a prefix of $w'$. Thus $x$ is a common prefix of
$w'$ and $w$. Next, if $x'$ is a common prefix of $w'$ and $w$, then
$x'$ is in $\A$ because it is a prefix of $w'$, and consequently $x'$
is a prefix of $x$ because $x$ is the longest prefix of $w$ in $\A$.
This shows the claim.

There are two cases for the merge.
If $w'$ is a prefix of $w$, then $w'=x$. In this case, there is no
minimization to be performed. 

If $w'$ is not a  prefix of $w$, then the paths for $w'$ and for $w$
share a common initial segment carrying the prefix $x$, from the
initial state to state $q=i\cdot x$. The minimization concerns the
states on the path $q\tto {y'} t'$ carrying the suffix $y'$ of the
factorization $w=xy'$ of $w'$. Each of the states in this path,
except the state $q$, will never be visited again in any insertion that
may follow, so they can be merged with previous states.

\begin{figure}
  \setlength{\commentspace}{6cm}
\hrule\smallskip\par
  \begin{algorithmic}
    \FUNC{DaciukEtAl$(\A)$}
    \FORALL{\algcomment{2.82}{Words are given in lexicographic order}$w$}
    \STATE\algcomment{1}{$x$ is the longest prefix of $w$ in $\A$}%
    $(x,y)\gets\textsc{PrefSuffDecomp}(w)$
    \STATE\algcomment{1}{$q$ is the state reached by reading $x$}$q\gets i\cdot x$
    \STATE\algcomment{1}{Minimize the states on this
      path}$\textsc{MinimizeLastPath}(q)$ 
    \STATE\algcomment{1}{Adds a path starting in $q$ and carrying $y$}%
    $\textsc{AddPath}(q,y)$ 
    \ENDFOR
  \end{algorithmic}
\hrule\smallskip
\caption{The incremental algorithm of Daciuk \textit{et al}.}\label{BBC:alg:daciuk}
\end{figure}

\begin{example}\label{BBC:example}
  We consider the sequence of words $(aa,aba,ba,bba,bc)$. The
  first two words give the automaton of
  Figure~\ref{BBC:fig:daciuk1}(a). Adding the word $ba$ permits the
  merge of states $2$ and $4$. The resulting automaton is given in 
Figure~\ref{BBC:fig:daciuk1}(b). After inserting $bba$, there is a
merge of states $6$ and $2$,  
see  Figure~\ref{BBC:fig:daciuk2}(a).
\end{example}

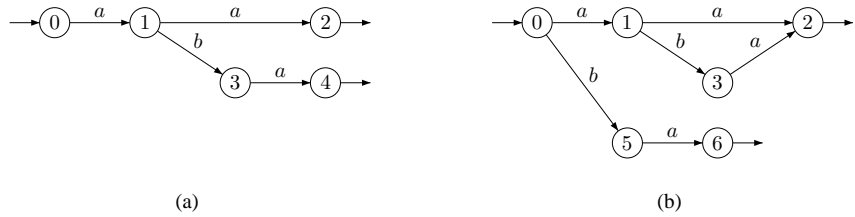
\begin{figure} 
  \centering
  \scalebox{0.8}{%
  \gasset{Nh=5,Nw=5}
  \begin{picture}(45,35)(0,-30)
    \node[Nmarks=i](0)(0,0){$0$}
    \node(1)(15,0){$1$}
    \node[Nmarks=f](2)(45,0){$2$}
    \node(3)(30,-10){$3$}
    \node[Nmarks=f](4)(45,-10){$4$}
    \drawedge(0,1){$a$}\drawedge(1,2){$a$}
    \drawedge(1,3){$b$}\drawedge(3,4){$a$}
    \node[Nframe=n](numero)(22,-30){(a)}
  \end{picture}\qquad\qquad\qquad\qquad\qquad
  \begin{picture}(45,35)(0,-30)
    \node[Nmarks=i](0)(0,0){$0$}
    \node(1)(15,0){$1$}
    \node[Nmarks=f](2)(45,0){$2$}
    \node(3)(30,-10){$3$}
    \drawedge(0,1){$a$}\drawedge(1,2){$a$}
    \drawedge(1,3){$b$}\drawedge(3,2){$a$}
    \node(5)(15,-20){$5$}
    \node[Nmarks=f](6)(30,-20){$6$}
    \drawedge(0,5){$b$}\drawedge(5,6){$a$}
    \node[Nframe=n](numero)(22,-30){(b)}
  \end{picture}}
  \caption{(a). The automaton for $aa,aba$. (b). The automaton for
    $aa,aba,ba$. Here  state $4$ has been merged with state $2$.} 
  \label{BBC:fig:daciuk1}
\end{figure}

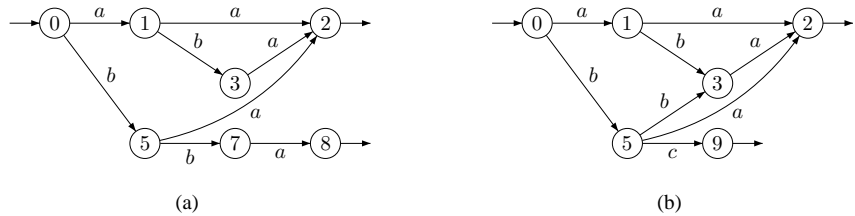
\begin{figure}
  \centering
  \scalebox{0.8}{
    \gasset{Nh=5,Nw=5}
  \begin{picture}(45,35)(0,-30)
    \node[Nmarks=i](0)(0,0){$0$}
    \node(1)(15,0){$1$}
    \node[Nmarks=f](2)(45,0){$2$}
    \node(3)(30,-10){$3$}
    \drawedge(0,1){$a$}\drawedge(1,2){$a$}
    \drawedge(1,3){$b$}\drawedge(3,2){$a$}
    \node(5)(15,-20){$5$}
    \drawedge(0,5){$b$}\drawedge[ELside=r,curvedepth=-4](5,2){$a$}
    \node(7)(30,-20){$7$}
    \node[Nmarks=f](8)(45,-20){$8$}
    \drawedge[ELside=r](5,7){$b$} 
    \drawedge[ELside=r](7,8){$a$}
    \node[Nframe=n](numero)(22,-30){(a)}
  \end{picture}\qquad\qquad\qquad\qquad\qquad
   \begin{picture}(45,35)(0,-30)
    \node[Nmarks=i](0)(0,0){$0$}
    \node(1)(15,0){$1$}
    \node[Nmarks=f](2)(45,0){$2$}
    \node(3)(30,-10){$3$}
    \drawedge(0,1){$a$}\drawedge(1,2){$a$}
    \drawedge(1,3){$b$}\drawedge(3,2){$a$}
    \node(5)(15,-20){$5$}
    \drawedge(0,5){$b$}
    \drawedge[ELside=r,curvedepth=-4](5,2){$a$}
    \drawedge(5,3){$b$} 
    \node[Nmarks=f](9)(30,-20){$9$}
    \drawedge[ELside=r](5,9){$c$} 
    \node[Nframe=n](numero)(22,-30){(b)}
  \end{picture}}
 \caption{(a). The automaton for $aa,aba,ba,aba,bba$. (b). The
   automaton for $aa,aba,ba,aba,bba,bc$. After inserting $bc$, 
   states $8$ and $2$ are merged, and then  states $7$ and $2$.} 
  \label{BBC:fig:daciuk2}
\end{figure}



%
%
\section{Dynamic minimization}\label{BBC:sec:dynamic}

\textit{Dynamic minimization} is the process of maintaining an
automaton minimal when insertions or deletions are performed.  

A solution for adding and for removing a word was proposed by Carrasco
and Forcada \cite{Carrasco&Forcada:2002}. It consists in an adaptation
of the usual textbook constructions for intersection and complement to
the special case where one of the languages is a single word. It
appears that the finiteness of the language $L$ plays no special role,
so we assume here that it is regular, not necessarily finite.  The
construction for adding a word has also been proposed
in~\cite{Sgarbas&Fakotakis&Kokkinakis:2003}, and
in~\cite{Daciuk&Mihov&Watson&Watson:2000} for acyclic automata. An
extension to general automata, and several other issues, are discussed
in~\cite{Daciuk:2004}.

We consider here, for lack of space, only deletion of a word from the
set recognized by an automaton, and minimization of the new automaton.

Let $\A=(Q, i, T)$ be the minimal automaton recognizing a language $L$
over the alphabet $A$, and let $w$ be a word in $L$. Denote by $\A_w$
the minimal automaton recognizing the complement $A^*\setminus w$. The
automaton has $n+2$ states, with $n=|w|$. Among them, there are $n+1$
states that are identified with the set $P$ of prefixes of $w$. The
last state is a sink state denoted $\bot$. An example is given in
Figure~\ref{BBC:fig:abab}.

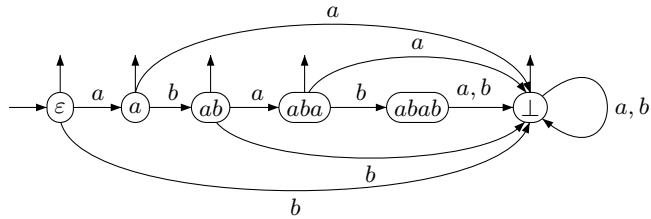
\begin{figure}
  \centering\small\unitlength=0.5mm
  \begin{picture}(125,50)(0,-25)
    \gasset{Nh=8,Nadjust=w,Nadjustdist=2,fangle=90}
    \node[Nmarks=if](e)(0,0){$\e$}
    \node[Nmarks=f](a)(20,0){$a$}
    \node[Nmarks=f](ab)(40,0){$ab$}
    \node[Nmarks=f](aba)(65,0){$aba$}
    \node(abab)(95,0){$abab$}
    \node[Nmarks=f](bot)(125,0){$\bot$}
    \drawedge(e,a){$a$}
    \drawedge(a,ab){$b$}
    \drawedge(ab,aba){$a$}
    \drawedge(aba,abab){$b$}
    \drawedge(abab,bot){$a,b$}
    \drawloop[loopangle=0](bot){$a,b$}
    \drawbpedge[ELside=r](e,-100,30,bot,-80,30){$b$}
    \drawbpedge(a,100,30,bot,80,30){$a$}
    \drawbpedge[ELside=r](ab,-90,18,bot,-90,18){$b$}
    \drawbpedge(aba,90,18,bot,90,18){$a$}
      \end{picture}
  \caption{The minimal automaton recognizing the complement of the
    word $abab$. Only the state $abab$ is not final.}
  \label{BBC:fig:abab}
\end{figure}
The language $L\setminus w$ is equal to $L\cap(A^*\setminus w)$, so it
is recognized by the trimmed part $B$ of the product automaton
$\A\times\A_w$. Its initial state is $(i,\e)$, and its states are of
three kinds.
\begin{itemize}
\item \emph{intact states}: these are states of the form $(q,\bot)$
  with $q\in Q$. They are called so because the language recognized at
  $(q,\bot)$ in $\B$ is the same as the language recognized at $q$ in
  $\A$: $L_\B(q,\bot)=L_\A(q)$.
\item \emph{cloned states}: these are accessible states $(q,x)$ with
  $x\in P$, so $x$ is a prefix of $w$. Since we require these states
  to be accessible, one has $q=i\cdot x$ in $\A$, and there is one such
  state for each prefix. The next-state function on these states is
  defined by
  \begin{displaymath}
    (q,x)\cdot a=
    \begin{cases}
      (q\cdot a,xa)&\text{if $xa\in P$,}\\(q\cdot a,\bot)&\text{otherwise.}
    \end{cases}
  \end{displaymath}
  Observe that $(i\cdot w,\bot)$ is an intact state because $w$ is
  assumed to be recognized by~$\A$.
\item \emph{useless states}: these are all states that are removed
  when trimming the product automaton.
\end{itemize}
Trimming consists here in removing the state $(i,\bot)$ if it is no
longer accessible, and the states reachable only from this state. For
this, one follows the path defined by $w$ and starting in $(i,\bot)$
and removes the states until one reaches a confluent state (that has
at least two incoming edges). The automaton obtained is minimal. 

The whole construction finally consists in keeping the initial
automaton, by renaming a state $q$ as $(q,\bot)$, adding a cloned
path, and removing state $(i,\bot)$ if it is no longer accessible, and
the states reachable only from this state.  

\begin{figure}
  \setlength{\commentspace}{6cm}
\hrule\smallskip\par
  \begin{algorithmic}
    \FUNC{RemoveIncremental$(w,\A)$}
    \STATE\algcomment{0}{Add a fresh path for $w$ in $\A$}%
    $\A'\gets\textsc{AddClonedPath}(w,\A)$
    \STATE\algcomment{0}{Return trimmed automaton}$\textsc{Trim}(\A')$ 
  \end{algorithmic}
\bigskip\par
  \begin{algorithmic}
    \FUNC{AddClonedPath$(a_1\cdots a_n,\A)$}
    \STATE\algcomment{0}{Add a fresh initial state $q_0$}%
    $p_0\gets\textsc{Initial}(\A);\ q_0\gets\textsc{Clone}(p_0)$
    \FOR{$i=1$ \TO $n$}
   \STATE\algcomment{1}{$q_i$ inherits the transitions of $p_i$}%
    $p_i\gets p_{i-1}\cdot a_i;\ q_i\gets\textsc{Clone}(p_i)$
    \STATE\algcomment{1}{This edge is redirected}$q_{i-1}\cdot
    a_i\gets q_i$ 
    \ENDFOR
    \STATE$\textsc{SetFinal}(q_n,\FALSE)$
  \end{algorithmic}
\hrule\smallskip
\caption{Removing the word $w$ from the language recognized by $\A$.}\label{BBC:alg:removeincremental}
\end{figure}

Of course, one may also use the textbook construction directly, that is
without taking advantage of the existence of the automaton given at
the beginning.  For this, one starts at the new initial state
$(i,\e)$ and one builds only the accessible part of the product
automaton. The method has complexity $O(n+|w|)$, where $n$ is the
number of states of the initial automaton, whereas the previous method
has only complexity $O(|w|)$.
\begin{figure}
  \centering\scalebox{0.85}{%
  \begin{picture}(60,24)(0,-8)
    \gasset{Nh=6,Nw=6,fangle=90}
    \node[Nmarks=i](0)(0,0){$0$}
    \node(1)(15,0){$1$}
    \node[Nmarks=f](2)(30,7){$2$}
    \node(3)(30,-7){$3$}
    \node(4)(45,7){$4$}
    \node[Nmarks=f](5)(45,-7){$5$}
    \node[Nmarks=f](6)(60,7){$6$}
    \drawedge(0,1){$a$}
    \drawedge(1,2){$b$}\drawedge(1,3){$c$}
    \drawedge(2,4){$a$}\drawedge(2,5){$c$}
    \drawedge(3,5){$b$}
    \drawedge[curvedepth=5](4,6){$b$}
    \drawedge[curvedepth=5](6,4){$a$}
 \end{picture}}
  \caption{The minimal automaton recognizing the language
  $L=(ab)^+\cup\{abc,acb\}$.}
  \label{BBC:fig:lang}
\end{figure}
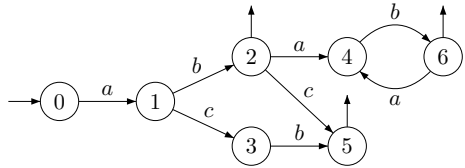
\begin{example}
  The automaton given in Figure~\ref{BBC:fig:lang} recognizes the language
  $L=(ab)^+\cup\{abc,acb\}$. The direct product with the automaton of
  Figure~\ref{BBC:fig:abab} is shown in Figure~\ref{BBC:fig:produit}. Observe
  that there are intact states that are not accessible from the new
  initial state $(0,\e)$. The minimal automaton is shown in
  Figure~\ref{BBC:fig:minimal}.  
\end{example}
\begin{figure}
  \centering\scalebox{0.85}{
  \begin{picture}(60,44)(0,-28)
    \gasset{Nh=6,Nw=8,fangle=90}
    \node[Nfill=y,fillgray=0.9](0)(0,0){$0,\bot$}
    \node[Nfill=y,fillgray=0.9](1)(15,0){$1,\bot$}
    \node[Nmarks=f,fillgray=0.9](2)(30,7){$2,\bot$}
    \node(3)(30,-7){$3,\bot$}
    \node(4)(45,7){$4,\bot$}
    \node[Nmarks=f](5)(45,-7){$5,\bot$}
    \node[Nmarks=f](6)(60,7){$6,\bot$}
    \drawedge(0,1){$a$}
    \drawedge(1,2){$b$}\drawedge(1,3){$c$}
    \drawedge(2,4){$a$}\drawedge(2,5){$c$}
    \drawedge(3,5){$b$}
    \drawedge[curvedepth=5](4,6){$b$}
    \drawedge[curvedepth=5](6,4){$a$}
    \def\vsize{-20}
      \node[Nmarks=i](e)(0,\vsize){$0,\e$}
      \node(a)(15,\vsize){$1,a$}
      \node[Nmarks=f,fangle=-90](ab)(30,\vsize){$2,ab$}
      \node[Nadjust=w](aba)(45,\vsize){$4,aba$}
      \node[Nadjust=w](abab)(60,\vsize){$6,abab$}
   \drawedge(e,a){$a$}
    \drawedge(a,ab){$b$}\drawedge(a,3){$c$}
    \drawedge(ab,aba){$a$}\drawedge(ab,5){$c$}
    \drawedge(aba,abab){$b$}
    \drawedge[ELside=r](abab,4){$a$}
   \end{picture}}
  \caption{The  automaton recognizing the  language
  $L=(ab)^+\cup\{abc,acb\}\setminus\{abab\}$. There are still
  unreachable states (shown in gray).}
  \label{BBC:fig:produit}
\end{figure}
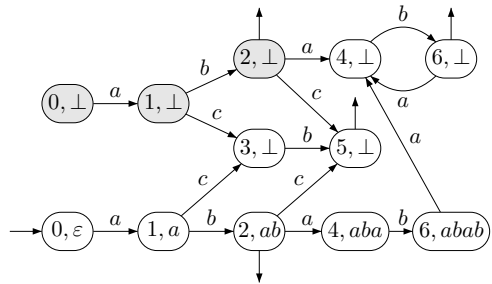
\begin{figure}[ht]
  \centering\scalebox{0.85}{
  \begin{picture}(75,25)(0,-8)
    \gasset{Nh=6,Nw=6,fangle=90}
    \node(3)(30,13){$3$}
    \node[Nmarks=f](4)(45,13){$5$}
    \node(7)(60,13){$7$}
    \node[Nmarks=f](8)(75,13){$8$}
    \node[Nmarks=i](e)(0,0){$0$}
    \node(a)(15,0){$1$}
    \node[Nmarks=f,fangle=-90](ab)(30,0){$2$}
    \node(aba)(45,0){$4$}
    \node(abab)(60,0){$6$}
    \drawedge(3,4){$b$}
    \drawedge[curvedepth=5](7,8){$b$}
    \drawedge[curvedepth=5](8,7){$a$}
   \drawedge(e,a){$a$}
    \drawedge(a,ab){$b$}\drawedge(a,3){$c$}
    \drawedge(ab,aba){$a$}\drawedge(ab,4){$c$}
    \drawedge(aba,abab){$b$}
    \drawedge[ELside=r](abab,7){$a$}
   \end{picture}}
  \caption{The  minimal automaton for the  language
  $L=(ab)^+\cup\{abc,acb\}\setminus\{abab\}$.}
  \label{BBC:fig:minimal}
\end{figure}
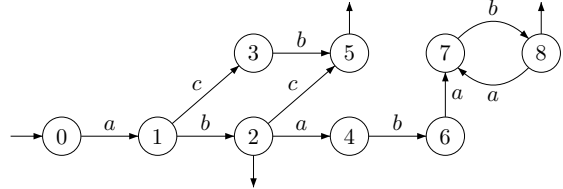


%
%

\section{Extensions and special cases}

In this section, we consider extensions of the minimization problem to
other classes of automata. The most important problem is to find a
minimal nondeterministic automaton recognizing a given regular
language. Other problems, not considered here, concern sets of
infinite words and the minimization of their accepting devices, and
the use of other kinds of automata known to be equivalent with respect
to their accepting capability, such as two-way automata or alternating
automata, see for instance~\cite{Perrin&Pin:2004}.

\subsection{Special automata}

We briefly mention here special cases where minimization plays a role.
It is well known that \emph{string matching} is closely related to the
construction of particular automata. If $w$ is a nonempty word over an
alphabet $A$, then searching for all occurrences of $w$ as a factor in a
text $t$ is equivalent to computing all prefixes of $t$ ending in $w$,
and hence to determining all prefixes of $t$ which are in the regular language
$A^*w$. The minimal automaton recognizing $A^*w$ has $n+1$ states,
where $n=|w|$, and can be constructed in linear time.  The automaton
has many interesting properties. For instance there are at most $2n$
edges, when one does not count edges ending in the initial state. This
is due to Imre Simon, see also \cite{Hancart:1993}. For a general
exposition, see e.g.  \cite{Crochemore&Hancart&Lecroq:2007}.
Extension of string matching to a finite set $X$ of patterns has been
done by Aho and Corasick. The associated automaton is called the
\emph{pattern matching machine}; it can be computed in time linear in
the sum of the lengths of the words in $X$. See again
\cite{Crochemore&Hancart&Lecroq:2007}. However, this automaton is not
minimal in general. Indeed, the number of states is the number of
distinct prefixes of words in $X$, and this may be greater than than
the number of states in the minimal automaton (consider for example
the set $X=\{ab,bb\}$ over the alphabet $A=\{a,b\}$). There are some
investigations on the complexity of minimizing Aho--Corasick automata,
see~\cite{AitMous&Bassino&Nicaud:2010}.

Another famous minimal automaton is the  \emph{suffix automaton}. This
is the minimal automaton recognizing all suffixes of a given
word. The number of states of the suffix automaton of a word of length
$n$ is less than $2n$, and the number of its edges is less than
$3n$. Algorithms for constructing suffix automata in linear time have
been given in \cite{Crochemore:1986} and \cite{Blumer&Blumer&Haussler:1985}, see again
\cite{Crochemore&Hancart&Lecroq:2007} for details.


%
%

\subsection{Nondeterministic automata}

A nondeterministic automaton is minimal%
\index{minimal!nondeterministic automaton}%
\index{automaton!minimal nondeterministic}%
\index{nondeterministic automaton!minimal} if it has the minimal
number of states among all automata recognizing the same language.
Nondeterministic automata are not unique. 
In
Figure~\ref{BBC:fig:dnondet}, we give two non-isomorphic
nondeterministic automata which are both smaller than the minimal
deterministic automaton recognizing the same language. This language
is $a(b^*\cup c^*)ab^+$. The example is derived from an automaton
given in \cite{Arnold&Dicky&Nivat:1992}.
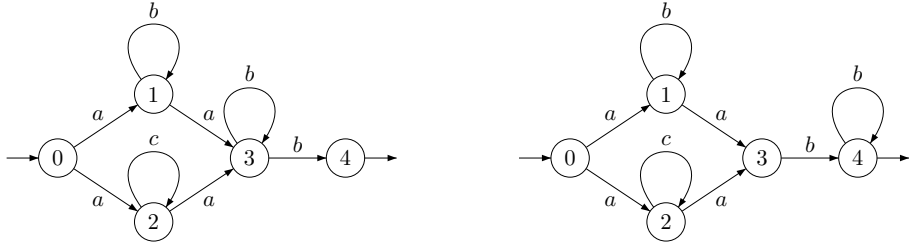
\begin{figure}[htb]  
  \centering
  \scalebox{0.85}{%
  \gasset{Nh=6,Nw=6}
  \begin{picture}(45,32)(0,-10)
    \node[Nmarks=i](0)(0,0){$0$}
    \node(1)(15,10){$1$}
    \node(2)(15,-10){$2$}
    \node(3)(30,0){$3$}
    \node[Nmarks=f](4)(45,0){$4$}
    \drawloop[loopangle=90](1){$b$}
    \drawloop[loopangle=90](2){$c$}
    \drawedge(0,1){$a$}\drawedge[ELside=r](0,2){$a$}
    \drawedge(1,3){$a$}\drawedge[ELside=r](2,3){$a$}
    \drawloop[loopangle=90](3){$b$}\drawedge(3,4){$b$}
  \end{picture}\qquad\qquad\qquad\qquad\qquad
  \begin{picture}(45,32)(0,-10)
    \node[Nmarks=i](0)(0,0){$0$}
    \node(1)(15,10){$1$}
    \node(2)(15,-10){$2$}
    \node(3)(30,0){$3$}
    \node[Nmarks=f](4)(45,0){$4$}
    \drawloop[loopangle=90](1){$b$}
    \drawloop[loopangle=90](2){$c$}
    \drawedge(0,1){$a$}\drawedge[ELside=r](0,2){$a$}
    \drawedge(1,3){$a$}\drawedge[ELside=r](2,3){$a$}
    \drawloop[loopangle=90](4){$b$}\drawedge(3,4){$b$}
  \end{picture}}
  \caption{Two non-isomorphic non-deterministic automata recognizing the
    set $a(b^*\cup c^*)ab^+$.} 
  \label{BBC:fig:dnondet}
\end{figure}

One might ask if there are simple conditions on the automata or on the
language that ensure that the minimal nondeterministic automaton is unique.
For instance, the  automata of
Figure~\ref{BBC:fig:deuxnondet} both recognize the same language, but
the second has a particular property that we will describe now.
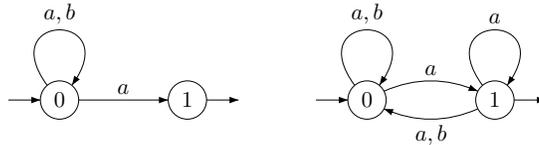
\begin{figure}[htb]
  \centering\scalebox{0.85}{
  \begin{picture}(20,15)(0,-3)
    \gasset{Nh=6,Nw=6}
    \node[Nmarks=i](0)(0,0){$0$}
    \node[Nmarks=f](1)(20,0){$1$}
    \drawloop[loopangle=90](0){$a,b$}
    \drawedge(0,1){$a$}
  \end{picture}\qquad\qquad\qquad\qquad
  \begin{picture}(20,15)(0,-3)
    \gasset{Nh=6,Nw=6}
    \node[Nmarks=i](0)(0,0){$0$}
    \node[Nmarks=f](1)(20,0){$1$}
    \drawloop[loopangle=90](0){$a,b$}
    \drawloop[loopangle=90](1){$a$}
    \drawedge[curvedepth=3](0,1){$a$}
    \drawedge[curvedepth=3](1,0){$a,b$}
  \end{picture}}
  \caption{Two nondeterministic automata recognizing the set of words
    ending with the letter $a$.}
  \label{BBC:fig:deuxnondet}
\end{figure}
The uniqueness of the minimal automaton in the deterministic case is
related to the fact that the futures of the states of such an
automaton are pairwise distinct, and that each future is some left
quotient of the language: for each state $q$, the language $L_q(\A)$
is equal to a set $y^{-1}L$, for some word $y$.

This characterization has been the starting point for investigating
similar properties of nondeterministic automata. Let us call a
(nondeterministic) automaton a \emph{residual
  automaton}\index{residual automaton} if the future of its states are
left quotients of the language; it has been shown in
\cite{Denis&Lemay&Terlutte:2001} that, among all residual automata
recognizing a given language, there is a unique residual automaton
having a minimal number of states; moreover, this automaton is
characterized by the fact that the set of its futures is the set of
the prime left quotients of the language, a left quotient being
\emph{prime} if it is not the union of other nonempty left quotients.
For instance, the automaton on the right of
Figure~\ref{BBC:fig:deuxnondet} has this property, since
$L_0=\{a,b\}^*a$ and $L_1=a^{-1}L_0=\e\cup\{a,b\}^*a$ and there are no
other nonempty left quotients. The automaton on the left of
Figure~\ref{BBC:fig:deuxnondet} is not residual since the future of
state~$1$ is not a left quotient.

The problem of converting a given nondeterministic automaton into a
minimal nondeterministic automaton is NP-hard, even over a unary
alphabet \cite{Jiang&Ravikumar:1993}. This applies also to unambiguous
automata \cite{Malcher:2004}. In  \cite{Bjorklund&Martens:2008}, these
results have been
extended  as follows. The authors define a class $\delta$NFA of
automata that are unambiguous, have at most two computations for each
string, and have at most one state with two outgoing transitions
carrying the same letter.  They show that minimization is NP-hard for
all classes of finite automata that include $\delta$NFA, and they show
that these hardness results can also be adapted to the setting of
unambiguous automata that can non-deterministically choose between two
start states, but are deterministic everywhere else.

Even approximating minimization of
nondeterministic automata is intractable, see
\cite{Gramlich&Schnitger:2005}. There is an algebraic framework that
allows one to represent and to compute all automata recognizing a given
regular language. The state of this theory, that goes back to
Kameda and Weiner ~\cite{Kameda&Weiner:1970},
has been described by Lombardy and Sakarovitch in a recent survey
paper~\cite{Lombardy&Sakarovitch:2007}. 

There is a well-known exponential blow-up from nondeterministic
automata to deterministic ones. The usual textbook example, already
given in the first section (the automaton on the left in
Figure~\ref{BBC:fig:nondetA}) shows that this blow-up holds also for
unambiguous automata, even if there is only one edge that causes the
nondeterminism.

It has been shown that any value of blow-up can be obtained, in the
following sense \cite{Jirasek&Jiraskova&Szabari:2007}: for all integers
$n, N$ with $n\le N\le 2^n$, there exists a minimal nondeterministic
automaton with $n$ states over a four-letter alphabet whose equivalent
minimal deterministic automaton has exactly $N$ states. This was
improved to ternary alphabets \cite{Jiraskova:2009}.

\nobreak
%
%

\section*{Acknowledgements}

We had several helpful discussions with Marie-Pierre B\'eal, Julien
David, Sylvain Lombardy, Wim Martens, Cyril Nicaud, S\'ebastien
Paumier, Jean-\'Eric Pin and Jacques Saka\-rovitch. We thank Narad
Rampersad for his careful reading of the text.

\bibliographystyle{abbrv}
\addcontentsline{toc}{section}{References}
\begin{footnotesize}
  \bibliography{abbrevs,MinimizationAutomata}
\end{footnotesize}

\newpage
%
%
\begin{abstract} 
  This chapter is concerned with the design and analysis of algorithms
  for minimizing finite automata. Getting a minimal automaton is a
  fundamental issue in the use and implementation of finite automata
  tools in frameworks like text processing, image analysis, linguistic
  computer science, and many other applications.

  Although minimization algorithms are rather old, there were recently
  some new developments which are explained or sketched in this
  chapter.

  There are two main families of minimization algorithms. The first
  are by a sequence of refinements of a partition of the set of
  states, the second by a sequence of fusions or merges of states.
  Hopcroft's and Moore's algorithms belong to the first family. The
  linear-time minimization of acyclic automata of Revuz belongs to the
  second family.

  We describe, prove and analyze Moore's and Hopcroft's algorithms.
  One of our studies is the comparison of these algorithms. It appears
  that they quite different both in behavior and in complexity. In
  particular, we show that it is not possible to simulate the
  computations of one of the algorithm by the other.  We also state
  the results known about average complexity of both algorithms.

  We describe the minimization algorithm by fusion for acyclic
  automata, and for a more general class of automata, namely those
  recognizing regular languages having  polynomially bounded number of
  words. 

  Finally, we consider briefly incremental algorithms for building
  minimal automata for finite sets of words. We consider the updating
  of a minimal automaton when a word is added or removed.
\end{abstract}


\printindex
\end{document}